\newcommand\BibTeX{{\rmfamily B\kern-.05em \textsc{i\kern-.025em b}\kern-.08em T\kern-.1667em\lower.7ex\hbox{E}\kern-.125emX}}
\def\eqref#1{equation~\ref{#1}}
\def\1{\bm{1}}
\DeclareMathAlphabet{\mathsfit}{\encodingdefault}{\sfdefault}{m}{sl}
\SetMathAlphabet{\mathsfit}{bold}{\encodingdefault}{\sfdefault}{bx}{n}
\def\gA{{\mathcal{A}}}
\def\gF{{\mathcal{F}}}
\def\gO{{\mathcal{O}}}
\def\sR{{\mathbb{R}}}
\def\sZ{{\mathbb{Z}}}
\DeclareMathOperator*{\argmax}{arg\,max}
\def\x{{\bf x}}
\def\0{{\bf 0}}
\def\1{{\bf 1}}
\def\argmax{\mathop{\rm argmax}}
\def\st{\mathsf{s.t.}}
\newcommand{\abs}[1]{|{#1}|}
\newcommand{\expect}{\mathbb{E}}
\newtheorem{thm}{Theorem}
\newtheorem{lem}{Lemma}
\begin{document}

\title{Principal-Agent Problem with Third Party: Information Design from Social Planner Perspective}
\author{Shiyun Lin\thanks{Center for Statistical Science, School of Mathematical Sciences, Peking University}, Zhihua Zhang\thanks{Center for Statistical Science, School of Mathematical Sciences, Peking University}}

\date{November, 2024}

\maketitle

\begin{abstract}
	We study the principal-agent problem with a third party that we call social planner, whose responsibility is to reconcile the conflicts of interest between the two players and induce socially optimal outcome in terms of some given social utility function. 
	The social planner owns no contractual power but manages to control the information flow between the principal and the agent.
	We design a simple workflow with two stages for the social planner. In the first stage, the problem is reformulated as an optimization problem whose solution is the optimal utility profile. In the second stage, we investigate information design and show that binary-signal information structure suffices to induce the socially optimal outcome determined in the first stage. 
	The simplicity and modularity of our method make it easy to implement in various scenarios within the principal-agent problem.
\end{abstract}

\section{Introduction}\label{sec:intro}
The principal-agent problem lies at the heart of modern economic theory due to its widespread applications, including corporate governance~\citep{young2008corporate,khan2011literature}, insurance design~\citep{pauly1968economics,vera2003structural}, healthcare systems~\citep{smith1997principal,scott1999patients}, contractual hiring arrangements~\citep{roach2016application}, education~\citep{levavcic2009teacher,lane2008interests}, real estate markets~\citep{anglin1991residential,pagliari2015principal}, sociology~\citep{adams1996principals}, etc. 
In classic principal-agent problems, there are two strategic entities involved in the system, namely the principal and the agent. 
Generally speaking, these two parties usually have different interests, such that the principal cannot directly ensure that the agent is always acting in the principal's best interest, and hence the principal would make a contract to alleviate the problem, which specifies the monetary transfer that the principal will pass to the agent as a function of the outcome.
However, due to information asymmetry (the agent having more information), contractual arrangement would not be perfect and there may exist welfare loss.

Moreover, in some cases there would be a third party acting as a mediator to reconcile the conflicts of interest between the principal and the agent. For example, in a publicly traded corporation, the relationship between shareholders and the management team can be modeled as a principal-agent problem, where shareholders (the principals) invest their money in the company and expect the management team (the agents) to make decisions that maximize the company's value.
However, conflicts of interest can arise when the management team has the power to make strategic decisions and compensation packages that might not align with shareholders' interests. 
Specifically, shareholders want the company to maximize profits, while the management team might be more focused on their own job security, reputation, or short-term financial gains.
To address this issue, a board of directors can act as a mediator (the third party) between the shareholders and the management team. The board is typically elected by shareholders and is responsible for overseeing management's actions, settling the compensation for the management and accounting to the shareholders for the organization's performance.

The above example can be abstracted as a principal-agent-mediator model, where the mediator acts as a third-party to mitigate the conflicts of interest between the principal and the agent. In this paper, we consider such mediator as a social planner, which is an independent third party. The social planner considers the principal and the agent as a whole, aiming to optimize the profit of the system and seek the socially optimal outcome so that both the principal and the agent would be satisfied. 
To achieve this goal, we leverage tools from information design, which is a technique to influence the outcome of a game by specifying the allocation of information~\cite{kamenica2019bayesian}.
In particular, during the interaction between the principal and the agent, the social planner possesses the power to manage the information flow between the two entities. In other words, by designing an appropriate information structure, the social planner controls how much information about the agent's action is revealed to the principal, so that the equilibrium of the Stackelberg game is socially optimal as defined by some chosen social utility function, given that both the principal and the agent are rational strategic players.

\citet{babichenko2022information} considered a similar problem and formalized the idea mathematically. In their paper, they characterized the implementability of the utility profiles of the principal and the agent; that is, they figured out the set of utility profiles that can be induced by some information structure. Based on their work, we take a step further and consider the optimization problem for the social planner to induce the socially optimal outcome for the system. 
We design a workflow for the social planner, which divides the task into two phases, by first solving an optimization problem to determine the socially optimal utility profile and then designing an information structure to guarantee the equilibrium of the game exactly lies in the chosen utility profile.
The two-stage formulation provides modularity and simplicity for the social planner. On one hand, the derivation of the socially optimal utility profile in the first stage could be solved by a simple geometric approach in the space of utility. On the other hand, the particular structure of the principal-agent problem greatly reduces the complexity of the strategic communication, in the sense that it suffices for the social planner to use binary signal to induce the utility profile determined in the first stage.

\subsection{Further Related Work}
The principal-agent problem has long been studied in the literature since the seminal work of \citet{ross1973economic,jensen1976theory,mirrlees1971exploration,holmstrom1979moral}. 
The effect of information revelation on the utility function of the principal and the agent has been discussed.
\citet{gjesdal1982information} studied a generalized agency model and provided a characterization for the ranking of information systems based on a generalization of Blackwell's ordering~\cite{blackwell1950comparison}, where the ranking is in terms of the principal's preference. \citet{kim1995efficiency} followed the agency framework and established another criterion based on mean-preserving spread. \citet{demougin2001ranking} later provided an integral condition and showed its equivalence to the mean-preserving spread condition. 
\citet{milgrom1981good} introduced the notion of ``favorableness'' of the revealed information in the sense of first-order stochastic dominance, whose focus is not only on the principal's, but also on the agent's point of view. 
\citet{jacobides2001information,silvers2012value,chaigneau2018does} further studied the effect of informativeness of signals on the favorableness, both for the principal and the agent.
Our concern is different from the above works, we focus on neither the principal's nor the agent's perspective, instead, we consider the two entities as a system and act as a social planner whose responsibility is to design an information structure that could induce a \emph{socially optimal} outcome.

Most of the literature studied the interaction between the two parties, namely the principal and the agent, nonetheless, there has been researches on the agency model involving a third-party.
\citet{maskin1990principal} introduced a third party whose optimization problem is to maximize an arbitrary weighted sum of the utility functions of different types of the principal. Moreover, this third party possesses the power of information transmission and contract implementation. 
\citet{braun1993governs} considered a model in political research with triadic structure, namely the policy maker, the intermediary funding organization, and the scientists, and they assessed the role of the third party in a qualitative way. 
\citet{van1998science, van2003new} further considered the tripartite relationship in an empirical study.
Recently, \citet{bizzotto2021communication} developed a model of information design with agency where the third party decides on the way of implementing the information structure proposed by the principal, \citet{boleslavsky2018bayesian} studied a variant of Bayesian persuasion where the underlying state is generated by a third party's unobservable effort, \citet{perez2022test} considered a setting where the state of the world could be falsified by a third party at a cost. 
Different from these works, the third party in our model has neither the contractual power, nor the power of determining the state of the world. The social planner in our model is responsible for formulating the information structure, which controls the information flow between the agent's action and the principal's belief.

\citet{mahzoon2023indicator} studied a variant of the principal-agent problem where the signal observed by the principal is chosen by the agent. They also considered the optimization problem with the goal of characterizing the optimal information structure from the agent's perspective, which could be viewed as a special case of our model if the social planner is fully in aligned with the agent's interest. They developed a geometric approach for solving the optimization problem. They constructed the geometric game in the space of likelihood ratios while our geometric approach works in the space of principal's and agent's utilities, which is more intuitive and direct. Interestingly, they also found that binary signal suffices for the optimal information structure.

\section{Problem Formulation}\label{sec:setting}
The \emph{principal-agent problem with third party} involves three strategic entities playing different roles in the game, 
where the \emph{agent} is the player who actually takes the action, 
the \emph{principal} possesses the contractual power, 
and the \emph{social planner} controls the information flow and tries to design the information about the agent's action that the principal will observe.
The common knowledge of the game consists of three components and is defined by a tuple $(n, r, c)$. The agent has $n$ possible actions, denoted as $[n] := \left\{0, 1, \cdots, n\right\}$. When the agent takes action $a$, a deterministic cost $c_a$ and a stochastic reward would be incurred.\footnote{Here we adopt the typical setting in the principal-agent problem, where the reward from an action is stochastic owing to some external factors.} Taking expectation over the possible outcomes, we denote the expected reward induced by action $a$ as $r_a$. Then the vector of rewards and costs are $r = (r_0, r_1, \cdots, r_n) \in \sR_{\geq 0}^{n+1}$ and $c = (c_0, c_1, \cdots, c_n) \in \sR_{\geq 0}^{n + 1}$, respectively. Note that the action $0$ is a special \emph{default} action: $r_0 = c_0 = 0$. Namely, the agent always has the option to take no effort and induce no reward.

Based on the knowledge of the rewards $r$ and the costs $c$, the social planner needs to design an information structure $(k, I)$, which is a stochastic mapping between the $n$ possible actions and the $k$ signals chosen by the social planner. Formally, $k \in \sZ_{+}$ is the number of possible \emph{signals} the principal may observe. And $I \in \sR^{n \times k}$ is a row-stochastic matrix, where the $a$-th row $I_a$ specifies the distribution over the $k$ signals when the agent takes action $a$.\footnote{Our model is similar as the one considered in \citet{babichenko2022information} in terms of the common knowledge of the game, a substantial difference is that in their setting, the information structure is given, and they aim to characterize whether a given information structure $(k, I)$ is implementable or not, while in our setting, we mainly aim to design an optimal information structure.}

There always exists an action which is the most ``cost-effective'', i.e., it has the maximal expected income for the principal among all the least costly actions for the agent (besides the default action 0). For notational simplicity  we denote this action as  $\hat{a}$. That is, $c_{\hat{a}} \leq c_a$ for $1 \leq a \leq n$, and $r_{\hat{a}} \geq r_a$ for every $a$ such that $c_a = c_{\hat{a}}$.

Upon observing the signal $j \in \left[k\right]$, the principal commits to transfer $t_j$ to the agent. We denote the contract by a vector $t = \left(t^1, t^2, \cdots, t^k\right)$ and let $t_0 := 0$ for convenience of notation.
Furthermore, we assume \emph{limited liability}~\citep{sappington1983limited,innes1990limited}: $t^j \geq 0$ for every $j \in \left[k\right]$, i.e., the principal cannot charge the agent. 
Therefore, the induced expected transfer at action $a \in \left[n\right]$ is denoted by $t_a := \expect_{j \sim I_a} \left[t^j\right]$. 

Through most of the paper, we consider a risk-neutral principal, whose utility function is given by $u_a^P = r_a - t_a$, i.e., the expected reward induced from the agent's action minus the expected monetary transfer. In Section~\ref{sec:risk_averse_principal}, we show similar result holds when the principal is risk-averse.
On the agent's side, we consider two types of risk attitudes.
On one hand, when the agent is \emph{risk-neutral}, her expected utility is linear, i.e., for an action $a$ and a contract $t$, the agent's utility is $u_a^A = t_a -  c_a$. 
On the other hand, when the agent is \emph{risk-averse}, we consider the utility function at action $a \in \left[n\right]$ being $u_a^A := \expect_{j \sim I_a} \left[v(t^j)\right] - c_a$, where $v: \sR_{\geq 0} \rightarrow \sR_{\geq 0}$ is a concave von Neumann-Morgenstern function that is used to capture the agent's attitude towards the part of utility deriving from monetary transfer.

The game with common knowledge $(n, r, c)$ proceeds in the following way:
\begin{enumerate}
	\item The social planner designs the information structure $(k, I)$ and informs the principal. 
	
	\item Based on the common knowledge $(n, r, c)$ and the information structure $(k, I)$, the principal designs the contract $t$ and informs the agent.
	
	\item The agent performs an action that can maximize her own utility function.
	
	\item The principal observes a signal related to the action performed by the agent based on the information structure $(k, I)$. According to the committed contract, a monetary transfer would be made between the principal and the agent. 
	
	\item The principal receives a reward and the agent suffers a cost, the utility function of the two parties are calculated accordingly.
\end{enumerate}

From the process above, we see that the principal-agent interaction is a Stackelberg game~\citep{stackelberg1952theory}. As is standard, we consider subgame perfect equilibria~\citep{selten1988reexamination}, and we naturally assume that both the principal and the agent are rational such that they will maximize their own utilities, and that the agent would break ties in the principal's favor. Under the above assumption, the utility profile at an equilibrium is uniquely determined.

In the above game, the goals of three players are as follows:
\begin{itemize}
	\item The agent: maximize her utility function based on the contract value $t$ and the cost $c$.
	
	\item The principal: design a contract that can incentivize the agent to take the action that would maximize the principal's utility.
	
	\item The social planner: design an information structure under which the Stackelberg game equilibrium is the one satisfying the social purpose, e.g., maximize a social utility function determined by the social planner.
\end{itemize}

In this paper, we would like to explore the following research question:

\emph{In terms of the social planner, what is the optimal information structure that can maximize a social utility function when the optimal contract is carried out by the principal?}

To answer this question, we design a workflow of the social planner, which consists of two phases.
Firstly, it utilizes common knowledge of the game to figure out an implementable utility profile that meets a specific social purpose, e.g., social welfare maximization or the utilities gained by the principal and the agent satisfy some fairness-aware criterion. Secondly, it designs an appropriate information structure to guide the behaviour of the principal and the agent towards the chosen utility profile. In Sections~\ref{sec:utility_profile_determination} and \ref{sec:info_struct_design}, we discuss these two phases, respectively, covering the cases where the agent is risk-neutral or risk-averse.

\section{Determination of Utility Profile}\label{sec:utility_profile_determination}
The determination of the utility profile can be reduced to an optimization problem that needs to be solved by the social planner, where the optimization objective and the feasible region are defined by the social utility function and the sets of implementable utility profiles, respectively.

\subsection{Social Utility Function}\label{subsec:social_welfare}
The social planner works with a system with two individuals, namely the principal and the agent.
Therefore, the purpose of the social planner is to maximize a function $w\colon \sR^2 \rightarrow \sR$ that aggregates each profile $(x, y) \in \sR^2$ of individual utility values into a social utility, where $u_a^P = x$, $u_a^A = y$. 
Technically, for our framework to be well-defined, we only require $w$ to be jointly continuous in $(u_a^P, u_a^A)$.
Practically, in this paper, to properly define the social utility function, we consider concepts from multi-agent resource allocation (MARA)~\citep{chevaleyre2006issues}.

\subsubsection{Overall Utility Based Function}\label{subsubsec:overall_utlity}
The overall utility based social utility function aims to measure the quality of the utility profile from the viewpoint of the system as a whole, and hence would consider a notion where each agent would have a contribution to the social utility function. Here we consider two widely adopted notions in \emph{Welfare Economics} and \emph{Social Choice Theory}.

\paragraph{Utilitarian social welfare} 
The concept of \emph{utilitarian social welfare} is defined as the sum of individual utilities:
\begin{align}\label{eq:utilitarian_sw}
	w_{USF}(x, y) = x + y.
\end{align}
This notion is probably the most widely used interpretation of the term ``social welfare'' in the multiagent systems literature~\citep{wooldridge2009introduction,sandholm1999distributed}, and it can provide a suitable metric for overall (as well as average) profit.

\paragraph{Nash product}
The \emph{Nash product} is defined as the product of individual utilities:
\begin{align}\label{eq:nash_prod}
	w_{NP}(x, y) = x \cdot y.
\end{align}
The notion of Nash product favours both increases in overall utility and inequality-reducing redistributions. Therefore, it would be a compromise between the utilitarian and egalitarian social welfare (which is described in the following).

\subsubsection{Equality Based Function}\label{subsubsec:equality}
The equality based social utility function is dedicated to address fairness between the principal and the agent.

\paragraph{Egalitarian Social Welfare}
The \emph{egalitarian social welfare} is given by the utility of the agent that is currently worse off. Therefore, in the principal-agent problem, we have
\begin{align}\label{eq:egalitarian_sw}
	w_{ESF}(x, y) = \min \left\{x, y\right\}.
\end{align}
The above notion is usually considered in the area of fair division~\citep{young1995equity,brams1996fair,moulin2004fair}, which offers a level of \emph{fairness} and indicates that the system should satisfy a minimum need of the two agents.

\paragraph{Approximated Fairness}
The \emph{approximated fairness}~\citep{fujita2012secure} ranks utility profiles based on the squared sum of the deviation of individual utility from average of the utilities:
\begin{align}\label{eq:approximate}
	\tilde{w}_{AF}(x, y) = \frac{1}{2}  \left(x - \frac{x + y}{2}\right)^2 + \frac{1}{2}  \left(y - \frac{x + y}{2}\right)^2 = \frac{\left(x - y\right)^2}{4}.
\end{align}
A utility profile $(x, y)$ is considered to be ideal if its $\tilde{w}_{AF}(x, y) = 0$, which is only achieved when $x = y$. 
Therefore, as a social utility function that the social planner would like to maximize it, we define $w_{AF}(x, y) = - \tilde{w}_{AF}(x, y)$.

All four utility functions listed above are jointly continuous in $x$ and $y$, i.e., the utility of the principal and the agent.

\subsection{Implementable Utility Profiles}\label{subsec:implement_utility}
The implementability of a utility profile $(x, y)$ is essential for information design in the principal-agent problem, as it characterizes the feasible solution of the optimization problem for the social planner. 
We say that a utility profile $(x, y) \in \sR_{\geq 0} \times \sR_{\geq 0}$ is implementable if there exists an information structure $I$ such that given $I$, the equilibrium outcome of the Stackelberg game is exactly $(x, y)$. \citet{babichenko2022information} characterize the set of implementable utility profiles, which is simple for the case that the risk attitude of the agent is either risk-neutral or risk-averse. The implementable utility profile set can be described by thresholds on the utilities of the two individuals in the system.

\paragraph{Risk-neutral Agent}~\citep[Theorem 3.1]{babichenko2022information} 
We first define the set of possible utility profiles of a given action $a$ for a risk-neutral agent:
\begin{align}\label{possible_set_risk_neutral}
	\begin{split}
		F_a := \left\{(x, y): x = r_a - s, y = -c_a + s, s \geq 0\right\}.
	\end{split}
\end{align}
Denote by $F := \cup_{1 \leq a \leq n} F_a \cup \left\{(0, 0)\right\}$ the super set of all possible utility profiles.
The implementable set of utility profiles with a risk-neutral agent can be described as 
\begin{align}\label{eq:implentable_set_risk_neutral}
	\gF := \left\{(x, y) \in F:  x \geq \max \left\{0, r_{\hat{a}} - c_{\hat{a}}\right\}, y \geq 0\right\}.
\end{align}

\paragraph{Risk-averse Agent}~\citep[Theorem 4.1]{babichenko2022information}
We first define the set of possible utility profiles of a given action $a$ for a risk-averse agent:
\begin{align}\label{eq:possible_set_risk_averse} 
	\begin{split}
		F_a := &\left\{(x, y): x = r_a - z, y \leq v(z) - c_a \quad \text{for some } z \in \sR_{\geq 0}\right\} \\
		= &\left\{(x, y): x \leq r_a, y \leq v(r_a - x) - c_i\right\}.
	\end{split}
\end{align}
Denote by $F := \cup_{1 \leq a \leq n} F_a \cup \left\{(0, 0)\right\}$ the super set of all possible utility profiles.
The implementable set of utility profiles with a risk-averse agent can be described as 
\begin{align}\label{eq:implementable_set_risk_averse}
	\gF := \left\{(x, y) \in F: x \geq \max \left\{r_{\hat{a}} - v^{-1}(c_{\hat{a}}), 0\right\}, y \geq 0\right\}.
\end{align}

The implementable sets for risk-neutral and risk-averse agent are shown in Figure~\ref{fig:implementable_set}.
From the characterization of the implementable sets (Eq.~(\ref{eq:implentable_set_risk_neutral}), (\ref{eq:implementable_set_risk_averse}) and Figure~\ref{fig:implementable_set}), we see that the set of implementable utility profiles is bounded and closed, i.e., the feasible region of the optimization problem that we are interested in (as defined in Eq.(\ref{eq:optimization}) below) is bounded and closed.

\vspace{-0.2in}
\begin{figure}[!htp]
	\centering
	\begin{tikzpicture}[xscale=0.6, yscale=0.6]
		\draw (-1,0)--(2,0);
		\draw[red!60!gray] (2,0)--(7,0);
		\draw[->] (7,0)--(8,0);
		\node[right] at (8,0) {$u^P$};
		
		\draw[->] (0,-3.5)--(0,6);
		\node[above] at (0,6) {$u^A$};
		
		\draw (4,-2)--(-1,3);
		\draw (5,-4)--(-1,2);
		\draw (6,-2.5)--(-1,4.5);
		\draw (8,-3)--(-1,6);
		
		\draw[line width=1mm] (3.5,0)--(2,1.5);
		
		\draw[line width=1mm] (5,0)--(2,3);
		
		\filldraw (4,-2) circle(0.1);
		\node[below] at (4.2,-2) {\footnotesize $(r_{\hat{a}}, -c_{\hat{a}})$}; 
		
		\filldraw (8,-3) circle(0.1);
		\node[below] at (8,-3) {\footnotesize $(r_a, -c_a)$};
		
		\filldraw (2,0) circle(0.1);
		\node[below] at (2,-0.1) {\footnotesize $r_{\hat{a}} - c_{\hat{a}}$};
		
		\filldraw[color = red!50!gray, fill = red!50!gray] (5, 0) circle(0.1);
		\node[below] at (5,-0.1) {\footnotesize $r_a - c_a$};
		
		\filldraw[color = red!50!gray, fill = red!50!gray] (2, 3) circle(0.1);
		\node[right] at (2,3) {\footnotesize $(r_{\hat{a}} - c_{\hat{a}}, r_a - c_a - r_{\hat{a}} + c_{\hat{a}})$};

		\draw[red!60!gray] (2,0)--(2,6);
		
		\node at (4.2, -4.5) {(a)};
	\end{tikzpicture}
	\hspace{7mm}
	\begin{tikzpicture}[xscale=0.7, yscale=1.4]
		\fill [red!40, domain = 1.001:2.64, variable = \x]
		(2.36, 0)--
		plot ({-\x+5}, {sqrt{\x} - 1})--cycle;
		
		\fill [red!40, domain = 2.56:3.868, variable = \x]
		(3.132, 0)--
		plot ({-\x+7}, {sqrt{\x} - 1.6})--cycle;
		
		\draw[->] (-1, 0)--(8,0);
		\draw[->] (0, -1.5)--(0,2);
		\node[right] at (8,0) {$u^P$};
		\node[above] at (0,2) {$u^A$};
		
		\draw[dashed, domain = 0.01:4, variable = \x]
		plot ({-\x+3}, {sqrt{\x} - 0.8});
		
		\draw[dashed, domain = 0.01:6, variable = \x]
		plot ({-\x+5}, {sqrt{\x} - 1});
		
		\draw[dashed, domain = 0.01:8, variable = \x]
		plot ({-\x+7}, {sqrt{\x} - 1.6});
		
		\filldraw (2.99, -0.7) ellipse (0.08 and 0.04);
		\node[below] at (2.99, -0.75) {\footnotesize $(r_{\hat{a}}, -c_{\hat{a}})$};
		
		\filldraw (4.99, -0.9) ellipse (0.08 and 0.04);
		\node[below] at (4.99, -0.95) {\footnotesize $(r_a, -c_a)$};
		
		\filldraw (6.99, -1.5) ellipse (0.08 and 0.04);
		
		\filldraw (2.36, 0) ellipse (0.08 and 0.04);
		\node[below] at (2.36, -0.05) {\footnotesize $r_{\hat{a}} - v^{-1}(c_{\hat{a}})$};
		
		\draw[domain = 0.01:3.868, variable = \x]
		plot ({-\x+7}, {sqrt{\x} - 1.6});
		
		\draw[domain = 1.868:6, variable = \x]
		plot ({-\x+5}, {sqrt{\x} - 1});
		
		\node at (4.2, -2.25) {(b)};
	\end{tikzpicture}
	\caption{Utility profiles and the implementable set. (a) The agent is risk-neutral, the bold parts of the lines are the implementable utility profiles. (b) The agent is risk-averse, the area below the solid line is the super set of all possible utility profiles, while the pink area is the set of implementable utility profiles.}
	\label{fig:implementable_set}
\end{figure}
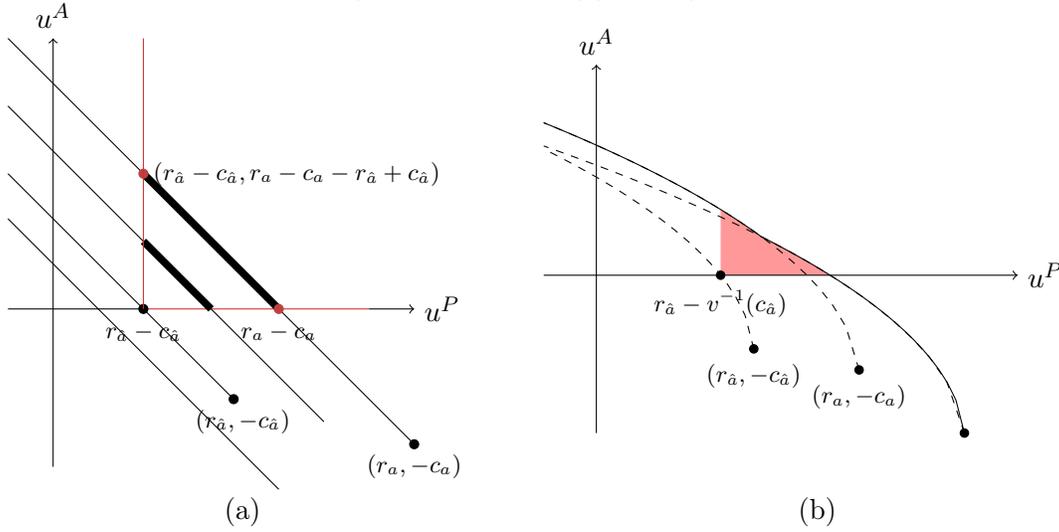
\vspace{-0.2in}

Moreover, we say that an action $a$ is implementable if there exists some information structure $I$ such that under this information structure, the agent would choose action $a$ assuming that the principal is rational in the sense that she would choose the optimal contract, i.e., action $a$ is at the equilibrium of the Stackelberg game.
For a risk-neutral agent, an action $a \in [n]$ is implementable if and only if $r_a - c_a \geq \max \left\{r_{\hat{a}} - c_{\hat{a}}, 0\right\}$, while for a risk-averse agent, action $a \in [n]$ is implementable if and only if $r_a - v^{-1}(c_a) \geq \max \left\{r_{\hat{a}} - v^{-1}(c_{\hat{a}}), 0\right\}$~\citep{babichenko2022information}.

\subsection{Maximize the Social Utility Function over Implementable Set}\label{subsec:optimization}
Based on the social utility function and the implementable utility profiles, the social planner needs to solve the following constrained optimization problem 
\begin{align}\label{eq:optimization}
	\max_{(x, y) \in \gF} w(x, y),
\end{align}
whose solution should be the input of the next phase, i.e., the equilibrium of the Stackelberg game that the social planner would like to induce.

In general, to ensure that the constrained optimization problem is solvable, by Weierstrass extreme value theorem, we require the feasible region $\gF$ to be bounded and closed, and $w$ to be jointly continuous in both $x$ and $y$. From Section~\ref{subsec:social_welfare} and \ref{subsec:implement_utility}, we see that the social utility functions and the implementable utility profiles under consideration satisfy the technical requirements and hence the optimal solution is guaranteed to exist.
However, in practice, solving the optimization problem~(\ref{eq:optimization}) is not simple. 
On one hand, the feasible region, i.e., the set of the implementable utility profiles, is not convex, both in the regime with a risk-neutral agent and a risk-averse agent (Figure~\ref{fig:implementable_set}).
On the other hand, the social utility function itself may not be concave in some case, e.g., the Nash product.
Therefore, we discuss case by case for different social utility functions and different risk attitudes of the agent.

\subsubsection{Overall Utility-Based Social Utility Function}\label{subsubsec:overall_optimization}
\paragraph{Risk-neutral Agent}
If the agent is risk-neutral, then no matter the social utility function is utilitarian social welfare (USF) or Nash product (NP), the social planner can use USF as a criterion to determine the induced action. 

To see the correctness of this claim, we first show that for each implementable action, there is a corresponding unique utility profile that maximizes the Nash product.
When the action $a$ is given, the relationship between the utility of the principal and the agent can be described as $u_a^A + u_a^P = r_a - c_a$. 
Then the Nash product of a utility profile is $u_a^A \cdot u_a^P = u_a^P \cdot (r_a - c_a - u_a^P)$, which is a quadratic function in terms of $u_a^P$. Note that $\max \left\{0, r_{\hat{a}} - c_{\hat{a}}\right\} \leq u_a^P \leq r_a - c_a$, by simple calculation, we know that the maximum of the Nash product attains at
\begin{align*}
	u_a^P = \left\{
	\begin{array}{ll}
		\frac{r_a - c_a}{2} & \text{if } \frac{r_a - c_a}{2} \geq \max \left\{0, r_{\hat{a}} - c_{\hat{a}}\right\}, \\
		r_{\hat{a}} - c_{\hat{a}} & \text{otherwise,}
	\end{array}
	\right.
\end{align*}
and 
\begin{align*}
	u_a^A = \left\{
	\begin{array}{ll}
		\frac{r_a - c_a}{2} & \text{if } \frac{r_a - c_a}{2} \geq \max \left\{0, r_{\hat{a}} - c_{\hat{a}}\right\}, \\
		r_a - c_a - \max \left\{0, r_{\hat{a}} - c_{\hat{a}}\right\} & \text{otherwise,}
	\end{array}
	\right.
\end{align*}
which proves the uniqueness of the utility profile that achieves the maximum Nash product, and its corresponding Nash product, denoted as $NP^*_{a}$ would be
\begin{align*}
	\left\{
	\begin{array}{ll}
		\frac{(r_a - c_a)^2}{4} & \text{if } \frac{r_a - c_a}{2} \geq \max \left\{0, r_{\hat{a}} - c_{\hat{a}}\right\}, \\
		\max \left\{0, r_{\hat{a}} - c_{\hat{a}}\right\} \cdot (r_a - c_a - \max \left\{0, r_{\hat{a}} - c_{\hat{a}}\right\}) & \text{otherwise.}
	\end{array}
	\right.
\end{align*}
For two different actions $a$ and $a'$, we show that $USF_a \leq USF_{a'}$ implies $NP^*_a \leq NP^*_{a'}$ in the following, which further implies that the social planner can use $USF$ to determine the action then search for the corresponding utility profile, simplifying the optimization process when dealing with the non-convex objective function Nash product.

If $USF_a \leq USF_{a'}$, i.e., $r_a - c_a \leq r_{a'} - c_{a'}$, there would be three possible cases: 
\begin{enumerate}
	\item $\max \left\{0, r_{\hat{a}} - c_{\hat{a}}\right\} \leq \frac{r_a - c_a}{2} \leq \frac{r_{a'} - c_{a'}}{2}$: The corresponding $NP^*$ would be $NP^*_a = \frac{(r_a - c_a)^2}{4} \leq \frac{(r_{a'} - c_{a'})^2}{4} = NP^*_{a'}$.
	
	\item $\frac{r_a - c_a}{2} \leq \frac{r_{a'} - c_{a'}}{2} \leq \max \left\{0, r_{\hat{a}} - c_{\hat{a}}\right\}$: The the corresponding $NP^*$ would be $NP^*_a = \max \left\{0, r_{\hat{a}} - c_{\hat{a}}\right\} \cdot (r_a - c_a - \max \left\{0, r_{\hat{a}} - c_{\hat{a}}\right\}) \leq \max \left\{0, r_{\hat{a}} - c_{\hat{a}}\right\} \cdot (r_{a'} - c_{a'} - \max \left\{0, r_{\hat{a}} - c_{\hat{a}}\right\}) = NP^*_{a'}$, since we have $\max \left\{0, r_{\hat{a}} - c_{\hat{a}}\right\} \geq 0$ and $r_{a'} - c_{a'} \geq r_a - c_a \geq \max \left\{0, r_{\hat{a}} - c_{\hat{a}}\right\}$ by the implementability of the actions.
	
	\item $\frac{r_a - c_a}{2} \leq \max \left\{0, r_{\hat{a}} - c_{\hat{a}}\right\} \leq \frac{r_{a'} - c_{a'}}{2}$: In this case, $NP^*_a = \max \left\{0, r_{\hat{a}} - c_{\hat{a}}\right\} \cdot (r_a - c_a - \max \left\{0, r_{\hat{a}} - c_{\hat{a}}\right\})$, while $NP^*_{a'} = \frac{(r_{a'} - c_{a'})^2}{4}$. Since $\max u_a^P \cdot \left(r_a - c_a - u_a^P\right) = \frac{(r_a - c_a)^2}{4}$, we have that $NP^*_a \leq \frac{(r_a - c_a)^2}{4} \leq NP^*_{a'}$.
\end{enumerate}

For utilitarian social welfare (USF), the optimal action for the agent is unique from the social planner's perspective, while the utility profile is not necessarily unique, since the monetary transfer between the principal and the agent would not affect the USF. Therefore, if the social planner chooses USF as the social utility function, any utility profile that satisfies $u^A + u^P = \max_{a \in \left[n\right]} \left\{r_a - c_a\right\}$ could be the optimal utility profile and the social planner can use it as an input for the next stage of information structure design.

For Nash product (NP), both the optimal action and the optimal utility profile is unique, as is shown in the above. The determined utility profile is
\begin{align}\label{eq:utility_profile_np_risk_neutral}
	(u^P, u^A) = \left\{
	\begin{array}{ll}
		\left(\frac{r_{a^*} - c_{a^*}}{2}, \frac{r_{a^*} - c_{a^*}}{2}\right) & \text{if } \frac{r_{a^*} - c_{a^*}}{2} \geq \max \left\{0, r_{\hat{a}} - c_{\hat{a}}\right\}, \\
		\left(r_{\hat{a}} - c_{\hat{a}}, r_{a^*} - c_{a^*} - \max \left\{0, r_{\hat{a}} - c_{\hat{a}}\right\}\right) & \text{otherwise,}
	\end{array}
	\right.
\end{align}
where $a^* = \argmax_{a \in \left[n\right]} \left\{r_{a} - c_{a}\right\}$. The optimal utility profile under the two cases are shown in Figure~\ref{fig:optimization_overall_risk_neutral}.

\begin{figure}[!htp]
	\centering
	\begin{tikzpicture}[xscale=0.6, yscale=0.6]
		\draw (-1,0)--(2,0);
		\draw[red!60!gray] (2,0)--(7,0);
		\draw[->] (7,0)--(8,0);
		\node[right] at (8,0) {$u^P$};
		
		\draw[->] (0,-3.5)--(0,6);
		\node[above] at (0,6) {$u^A$};
		
		\draw (4,-2)--(-1,3);
		\draw (5,-4)--(-1,2);
		\draw (6,-2.5)--(-1,4.5);
		\draw (8,-3)--(-1,6);
		
		\draw[line width=0.8mm] (3.5,0)--(2,1.5);
		
		\draw[line width=0.8mm] (5,0)--(2,3);
		
		\filldraw (4,-2) circle(0.1);
		\node[below] at (4.2,-2) {\footnotesize $(r_{\hat{a}}, -c_{\hat{a}})$}; 
		
		\filldraw (8,-3) circle(0.1);
		\node[below] at (8,-3) {\footnotesize $(r_{a^*}, -c_{a^*})$};
		
		\filldraw (2,0) circle(0.1);
		\node[below] at (2,-0.1) {\footnotesize $r_{\hat{a}} - c_{\hat{a}}$};
		
		\filldraw[color = red!50!gray, fill = red!50!gray] (5, 0) circle(0.1);
		\node[below] at (5,-0.1) {\footnotesize $r_{a^*} - c_{a^*}$};
		
		\filldraw[color = red!50!gray, fill = red!50!gray] (2, 3) circle(0.1);

		\draw[red!60!gray] (2,0)--(2,6);
		
		\draw [domain = 1.04:7.5, variable = \x]
		plot ({\x}, {6.25 / \x});
		
		\filldraw[color = blue!50, fill = blue!50] (2.5, 2.5) circle(0.1);
		\node[right] at (2.55, 2.5) {\footnotesize $(\frac{r_{a^*} - c_{a^*}}{2}, \frac{r_{a^*} - c_{a^*}}{2})$};
		
		\node at (4.2, -4.5) {(a)};
	\end{tikzpicture}
	\hspace{5mm}
	\begin{tikzpicture}[xscale=0.6, yscale=0.6]
		\draw (-1,0)--(3.5,0);
		\draw[red!60!gray] (3.5,0)--(7,0);
		\draw[->] (7,0)--(8,0);
		\node[right] at (8,0) {$u^P$};
		
		\draw[->] (0,-3.5)--(0,6);
		\node[above] at (0,6) {$u^A$};
		
		\draw (5.5,-2)--(-1,4.5);
		\draw (5,-4)--(-1,2);
		\draw (6.5,-2.5)--(-0.5,4.5);
		\draw (8,-3)--(-1,6);
		
		\draw[line width=0.8mm] (4,0)--(3.5,0.5);
		
		\draw[line width=0.8mm] (5,0)--(3.5,1.5);
		
		\filldraw (6.5,-2.5) circle(0.1);
		\node[below] at (6.7,-2.5) {\footnotesize $(r_{\hat{a}}, -c_{\hat{a}})$}; 
		
		\filldraw (8,-3) circle(0.1);
		\node[below] at (8,-3) {\footnotesize $(r_{a^*}, -c_{a^*})$};
		
		\filldraw (3.5,0) circle(0.1);
		\node[below] at (3,-0.1) {\footnotesize $r_{\hat{a}} - c_{\hat{a}}$};
		
		\filldraw[color = red!50!gray, fill = red!50!gray] (5, 0) circle(0.1);
		\node[below] at (5.5,-0.1) {\footnotesize $r_{a^*} - c_{a^*}$};
		
		\draw[red!60!gray] (3.5,0)--(3.5,6);
		
		\draw [dashed, color = blue, domain = 1.04:7.5, variable = \x]
		plot ({\x}, {6.25 / \x});
		
		\draw [domain = 0.88:7.5, variable = \x]
		plot ({\x}, {5.25 / \x});
		
		\filldraw[color = blue!50, fill = blue!50] (2.5, 2.5) circle(0.1);
		
		\filldraw[color = green!50!black, fill = green!50!black] (3.5, 1.5) circle(0.1);
		\node[right] at (3.5,1.7) {\footnotesize $(r_{\hat{a}} - c_{\hat{a}}, r_{a^*} - c_{a^*} - r_{\hat{a}} + c_{\hat{a}})$};
		
		\node at (4.2, -4.5) {(b)};
	\end{tikzpicture}
	\caption{Optimal utility profile if the social planner uses Nash product as the social utility function and the agent is risk-neutral. Denote $a^*$ as the action that induces the largest utilitarian social welfare (USF). (a) $\frac{r_{a^*} - c_{a^*}}{2} \geq \max \left\{0, r_{\hat{a}} - c_{\hat{a}}\right\}$, the blue point represents the optimal utility profile. (b) $0 \leq \frac{r_{a^*} - c_{a^*}}{2} \leq r_{\hat{a}} - c_{\hat{a}}$, the green point represents the optimal utility profile.}
	\label{fig:optimization_overall_risk_neutral}
\end{figure}
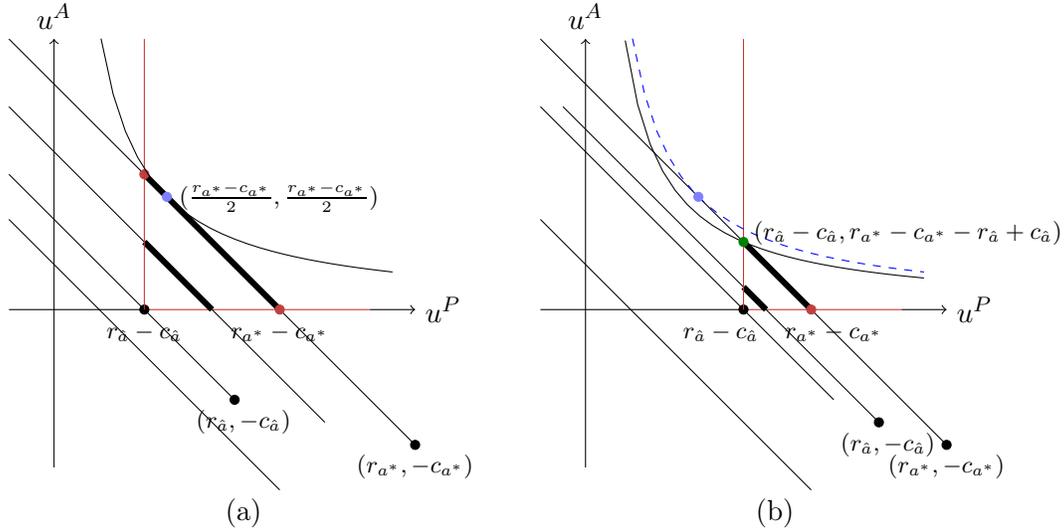

\paragraph{Risk-averse Agent}
If the agent is risk-averse, the set of implementable utility pairs becomes significantly richer compared with the risk-neutral case, and the whole set itself is not convex, the optimization problem may become complicated.
To address this issue, we can decompose problem~(\ref{eq:optimization}) into $n$ subproblems, where $n$ is the number of implementable actions, and the following optimization problem is equivalent to the original problem:
\begin{align}\label{eq:sub_optimization}
	\max_{a \in [n]} \max_{(x, y) \in F_{a} \cap \gF} w(x, y).	
\end{align}
Here, the outer problem is a maximization problem over $n$ candidates, which can be done in $\gO(n)$ comparisons, while the inner problem is an optimization problem over a convex set, since for every given action $a$, $u_a^A = v(r_a - u_a^P) - c_a$ is a concave function by the concavity of $v$, and the hypograph of the function $u_a^A$ is a convex set.

For utilitarian social welfare, the objective function is linear, its maximum over a convex set must exist on the boundary. Therefore, the solution can be reduced to finding the tangent point on the boundary with a slope of -1, i.e., the inner optimization problem is equivalent to solve for the maximum $c$ such that $y = -x + c$ intersects $F_a \cap \gF$. The solution is as follows, while the geometric illustration is shown in Figure~\ref{fig:optimization_usf_risk_averse} in Appendix~\ref{sec:opt_profile_usf_averse}.
\begin{align}\label{eq:utility_profile_usf_risk_averse}
	\begin{split}
		\left(u_a^P, u_a^A\right) = \left\{
		\begin{array}{ll}
			\left(r_a - v^{-1}(c_a), 0\right) & \text{if } v'(v^{-1}(c_a)) \leq 1, \\
			\multirow{2}{*}{$\left(r_{\hat{a}} - v^{-1}(c_{\hat{a}}), v(r_a - r_{\hat{a}} + v^{-1}(c_{\hat{a}})) - c_a\right)$}
			& \text{if } r_{\hat{a}} - v^{-1}(c_{\hat{a}}) \geq 0  \\
			& \text{ and } v'(r_a - r_{\hat{a}} + v^{-1}(c_{\hat{a}})) > 1,\\
			\left(0, v(r_a) - c_a\right) & \text{if } r_{\hat{a}} - v^{-1}(c_{\hat{a}}) < 0 \text{ and } v'(r_a) > 1,\\
			\left(r_a -v'^{(-1)}(1), v(v'^{(-1)}(1) - c_a)\right) & \text{otherwise.} \\
		\end{array}
		\right.
	\end{split}
\end{align}
To show the correctness of the solution, we only need to notice that $v$ is a strictly concave function, so that its first order derivative is strictly decreasing and the equation $v'(x) = 1$ has unique solution.

For Nash product, although the function $w(x, y) = x \cdot y$ is not concave, the optimal utility profile of a given action $a$ still has a simple form. By the individual rationality of the principal and the agent, we have that $u^A, u^P \geq 0$, hence when $u^P$ is fixed, the larger $u^A$, the larger the Nash product, which implies that the utility profile attains the maximum Nash product on the boundary $y = v(r_a - x) - c_a$ of the implementable set, as in the case of utilitarian social welfare. 
Let $x_a$ be the solution of the equation $v(r_a - x) - x \cdot v'(r_a - x) = c_a, y_a = v(r_a - x_a) - c_a$. 
And let $x_1 = r_{\hat{a}} - v^{-1}(c_{\hat{a}})$, 
the solution in this scenario is as follows
\begin{align}\label{eq:utility_profile_np_risk_averse}
	\begin{split}
		\left(u_a^P, u_a^A\right) = \left\{
		\begin{array}{ll}
			\left(x_a, y_a\right) & \text{if } x_a \in \left[\max \left\{0, r_{\hat{a}} - v^{-1}(c_{\hat{a}})\right\}, r_a - v^{-1}(c_a)\right], \\
			\left(x_1, v(r_a - x_1) - c_a\right) & \text{if } r_{\hat{a}} - v^{-1}(c_{\hat{a}}) \geq 0 \text{ and } 0 \leq x_a \leq r_{\hat{a}} - v^{-1}(c_{\hat{a}}). \\
		\end{array}
		\right.
	\end{split}
\end{align}
We prove that Eq. (\ref{eq:utility_profile_np_risk_averse}) provides the maximum Nash product over all implementable utility profiles for the given action $a$, i.e., $F_a \cap \gF$, in Appendix~\ref{sec:np_risk_averse}. For graph illustration, see Figure~\ref{fig:optimization_np_risk_averse} in Appendix~\ref{sec:opt_profile_np_averse}.

\subsubsection{Equality-based Social Utility Function}\label{subsubsec:equality_optimization}
\paragraph{Egalitarian Social Welfare}
When the agent is \emph{risk-neutral}, for a given action $a$, the relationship between the utility function of the principal $u_a^P$ and the agent $u_a^A$ is linear, which satisfies $u_a^A = -u_a^P + (r_a - c_a)$. 
Therefore, for a given action $a$, we have
\begin{align*}
	\max_{u_a^A + u_a^P = r_a - c_a} \min \left\{u_a^P, u_a^A\right\} = \frac{r_a - c_a}{2}.
\end{align*}
Hence, if $(\frac{r_a - c_a}{2}, \frac{r_a - c_a}{2})$ is implementable, i.e., $\frac{r_a - c_a}{2} \geq \max \left\{0, r_{\hat{a}} - c_{\hat{a}}\right\}$, it would be the optimal utility profile for the given action $a$ and the corresponding ESF is $\frac{r_a - c_a}{2}$. 
On the other hand, if the above utility profile is not implementable, by the definition of implementable set for a risk-neutral agent (Eq.(\ref{eq:implentable_set_risk_neutral})), we know that $r_{\hat{a}} - c_{\hat{a}} > 0$, $\frac{r_a - c_a}{2} < r_{\hat{a}} - c_{\hat{a}}$, and for any $(u_a^P, u_a^A) \in \gF$ such that the induced action is $a$, we have that $\min \left\{u_a^P, u_a^A\right\} = u_a^A$. 
Therefore, in this case, the maximum egalitarian social welfare the social planner can expect is $r_a - c_a - r_{\hat{a}} + c_{\hat{a}}$ which occurs at the utility profile $(r_{\hat{a}} - c_{\hat{a}}, r_a - c_a - r_{\hat{a}} + c_{\hat{a}})$.
Furthermore, consider two different implementable actions $a$ and $a'$, without loss of generality we assume that $r_a - c_a \leq r_{a'} - c_{a'}$ we have three possible scenarios for the optimal utility profiles with respect to the two actions:
\begin{itemize}
	\item If $\left(\frac{r_a - c_a}{2}, \frac{r_a - c_a}{2}\right)$ is implementable, then we must have $\left(\frac{r_{a'} - c_{a'}}{2}, \frac{r_{a'} - c_{a'}}{2}\right)$ is also implementable, since $\max \left\{0, r_{\hat{a}} - c_{\hat{a}}\right\} \leq \frac{r_a - c_a}{2} \leq \frac{r_{a'} - c_{a'}}{2}$. In this scenario, $a'$ would be preferred as it returns the larger ESF.
	
	\item If $\left(\frac{r_{a'} - c_{a'}}{2}, \frac{r_{a'} - c_{a'}}{2}\right)$ is not implementable, then we must have $\left(\frac{r_a - c_a}{2}, \frac{r_a - c_a}{2}\right)$ is neither implementable for the similar reason as above. In this scenario, the ESF for the two actions are $r_a - c_a - r_{\hat{a}} + c_{\hat{a}} \leq r_{a'} - c_{a'} - r_{\hat{a}} + c_{\hat{a}}$ and hence the preferred action is also $a'$.
	
	\item If $\left(\frac{r_{a'} - c_{a'}}{2}, \frac{r_{a'} - c_{a'}}{2}\right)$ is implementable, while $\left(\frac{r_a - c_a}{2}, \frac{r_a - c_a}{2}\right)$ is not implementable, the preferred action is still $a'$. In this case, we have $0 \leq \frac{r_a - c_a}{2} \leq r_{\hat{a}} - c_{\hat{a}} \leq \frac{r_{a'} - c_{a'}}{2}$, the egalitarian social welfare for action $a$ would be $r_a - c_a - r_{\hat{a}} + c_{\hat{a}} \leq r_a - c_a - \left(\frac{r_a - c_a}{2}\right) = \frac{r_a - c_a}{2} \leq \frac{r_{a'} - c_{a'}}{2}$.
\end{itemize}
Based on the above observations, we can conclude that with a risk-neutral agent, if the social planner uses ESF as the social utility function, the induced action for the agent is the one with maximum utilitarian social welfare, i.e., $a^*= \argmax_{a \in [n]} r_a - c_a$, 
while the optimal utility profile and the corresponding ESF is as follows (The graphical illustration is shown in Figure~\ref{fig:optimization_esf_risk_neutral} in Appendix~\ref{sec:opt_profile_esf}):
\begin{itemize}
	\item If $\frac{r_{a^*} - c_{a^*}}{2} \geq \max \left\{0, r_{\hat{a}} - c_{\hat{a}}\right\}$, the optimal utility profile is $\left(\frac{r_{a^*} - c_{a^*}}{2}, \frac{r_{a^*} - c_{a^*}}{2}\right)$ and the corresponding ESF is $\frac{r_{a^*} - c_{a^*}}{2}$.
	
	\item If the above condition is not satisfied, the optimal utility profile is $\left(r_{\hat{a}} - c_{\hat{a}}, r_{a^*} - c_{a^*} - r_{\hat{a}} + c_{\hat{a}}\right)$, and the corresponding ESF is $r_{a^*} - c_{a^*} - r_{\hat{a}} + c_{\hat{a}}$.
\end{itemize}

When the agent is \emph{risk-averse}, for a given implementable action $a$ with $r_a - v^{-1}(c_a) \geq \max \left\{0, r_{\hat{a}} - v^{-1}(c_{\hat{a}})\right\}$, denote $x_a$ as the solution of the equation $x = v(r_a - x) - c_a$, there are two possible scenarios:
\begin{itemize}
	\item $0 \leq x_a \leq r_{\hat{a}} - c_{\hat{a}}$, then for any implementable utility profiles in $F_a \cap \gF$, we have $\min \left\{u_a^P, u_a^A\right\} = u_a^A$. Therefore, maximizing ESF is equivalent to searching for the utility profile within $F_a \cap \gF$ whose $u_a^A$ is the maximum, which is $(r_{\hat{a}} - c_{\hat{a}}, v(r_a - r_{\hat{a}} + c_{\hat{a}}) - c_a)$ and the corresponding ESF is $v(r_a - r_{\hat{a}} + c_{\hat{a}}) - c_a$.
	
	\item $0 \leq r_{\hat{a}} - c_{\hat{a}} \leq x_a$. In this case, let $u_a^P = x, u_a^A = y$ the line $y = x$ would cross the implementable set $F_a \cap \gF$, and divide the set into two parts (see Figure~\ref{fig:optimization_esf_risk_averse} (a) in Appendix~\ref{sec:opt_profile_esf}): at the bottom right part, we have $\min \left\{u_a^P, u_a^A\right\} = u_a^A$, while at the top left part, we have $\min \left\{u_a^P, u_a^A\right\} = u_a^P$. Therefore, for the bottom right part, we aim to find the utility profile with the largest $u_a^A$. Since $y = x$ is strictly increasing and $y = v(r_a - x) - c_a$ is strictly decreasing, a single utility profile belongs to the bottom right part is optimal, i.e., $(x_a, x_a)$ where $x_a$ is the unique solution of the equation $x = v(r_a - x) - c_a$. For the top left part, the derivation is similar since both $y = x$ and $y = v(r_a - x) - c_a$ are strictly monotone functions, there is a one-to-one correspondence between $x$ and $y$. With similar calculations, we can see that the optimal utility profile for the top left part is the same as that for the bottom right part, which appears on the border of the two parts. Therefore, in this scenario, the optimal utility profile is $(x_a, x_a)$ with $x_a$ being the solution of the equation $x = v(r_a - x) - c_a$ and the corresponding ESF is $x_a$.
\end{itemize}

Consider two different implementable actions $a$ and $a'$, there are also three possible scenarios as in the risk-neutral case.
To illustrate, we first denote $x_a$ and $x_{a'}$ as the solution of the equations $x = v(r_a - x) - c_a$ and $x = v(r_{a'} - x) - c_{a'}$, respectively. Without loss of generality, we assume that $x_a \leq x_{a'}$. Furthermore, since $a$ and $a'$ are implementable actions, we have that $0 \leq x_a$.
\begin{itemize}
	\item $\max \left\{0, r_{\hat{a}} - v^{-1}(c_{\hat{a}})\right\} \leq x_a \leq x_{a'}$, both utility profiles $(x_a, x_a)$ and $(x_{a'}, x_{a'})$ are implementable, then $a'$ would be preferred since the ESF for the two actions are $x_a \leq x_{a'}$.
	
	\item $0 \leq x_a \leq r_{\hat{a}} - v^{-1}(c_{\hat{a}}) \leq x_{a'}$, the ESF for the two actions are $v(r_a - r_{\hat{a}} + v^{-1}(c_{\hat{a}})) - c_a$ and $x_{a'}$, respectively. Since $x_{a'} \geq x_a = v(r_a - x_a) - c_a \geq v(r_a - (r_{\hat{a}} - v^{-1}(c_{\hat{a}}))) - c_a$, where the last inequality follows from the fact that $v$ is a strictly increasing function, $a'$ is still the preferred action.
	
	\item $0 \leq x_a \leq x_{a'} \leq r_{\hat{a}} - v^{-1}(c_{\hat{a}})$, the optimal ESF for the two actions are $v(r_a - r_{\hat{a}} + v^{-1}(c_{\hat{a}})) - c_a$ and $v(r_{a'} - r_{\hat{a}} + v^{-1}(c_{\hat{a}})) - c_{a'}$, respectively. In this scenario, there is no single conclusion for the better action, and the choice depends on the exact value of $r_{\hat{a}} - v^{-1}(c_{\hat{a}})$.
	Therefore, we should directly compare the two optimal ESFs.
\end{itemize}
Based on the above observations, we can conclude that with a risk-averse agent, if the social planner uses ESF as the social utility function, the procedure for the determination of utility profile can be as follows:
\begin{enumerate}
	\item For each implementable action $a$, solve the equation $x = v(r_a - x) - c_a$ and take the solution as $x_a$.
	
	\item If $\exists a \in [n]$ such that $(x_a, x_a)$ is implementable, i.e., $x_a \geq \max \left\{0, r_{\hat{a}} - v^{-1}(c_{\hat{a}})\right\}$, then let $\gA = \left\{a: x_a \geq \max \left\{0, r_{\hat{a}} - v^{-1}(c_{\hat{a}}) \right\}\right\}$, and take $a^* = \argmax_{a \in \gA} x_a$, the optimal utility profile that maximizes ESF is $(x_{a^*}, x_{a^*})$ and the value of ESF is $x_{a^*}$.
	
	\item If $r_{\hat{a}} - v^{-1}(c_{\hat{a}}) \geq 0$ and $x_a < r_{\hat{a}} - v^{-1}(c_{\hat{a}}), \forall a \in [n]$, then $a^* = \argmax_{a \in [n]} v(r_a - r_{\hat{a}} + v^{-1}(c_{\hat{a}})) - c_a$, the optimal utility profile that maximized ESF is $(r_{\hat{a}} - v^{-1}(c_{\hat{a}}), v(r_{a^*} - r_{\hat{a}} + v^{-1}(c_{\hat{a}})) - c_{a^*})$ and the value of ESF is $v(r_{a^*} - r_{\hat{a}} + v^{-1}(c_{\hat{a}})) - c_{a^*}$.
\end{enumerate}

\paragraph{Approximated Fairness}
For approximated fairness, as is stated in Section~\ref{subsubsec:equality}, the ideal utility profile is achieved when $x = y$. Therefore, if the line $y = x$ has intersections with the implementable set, any utility profiles lie in the intersection is optimal in the sense that their AFs are all 0, i.e., the optimal utility profile would not be unique in this case. In particular, for risk-neutral agent, if $\gF_{AF} = \left\{(\frac{r_a - c_a}{2}, \frac{r_a - c_a}{2}): a \in [n], \frac{r_a - c_a}{2} \geq \max \left\{0, r_{\hat{a}} - c_{\hat{a}}\right\}\right\} \neq \emptyset$, then any $(x, y) \in \gF_{AF}$ is optimal. For risk-averse agent, the optimal utility profile set is
\begin{align*}
	\gF_{AF} = \left\{(x, y): a \in [n], y = x, x \geq \max \left\{r_{\hat{a}} - v^{-1}(c_{\hat{a}}), 0\right\}, y \leq v(r_a - x) - c_a\right\}
\end{align*}
if this set is non-empty.
On the other hand, if $\gF_{AF}$ is empty, i.e., the line $y = x$ has no intersection with the implementable set $\gF$, the optimal utility profile is unique, which is
\begin{align*}
	(r_{\hat{a}} - c_{\hat{a}}, r_{a^*} - c_{a^*} - r_{\hat{a}} + c_{\hat{a}}) \quad \text{with} \quad a^* = \argmax_{a \in [n]} r_a - c_a
\end{align*}
if the agent is risk-neutral, and is 
\begin{align*}
	(r_{\hat{a}} - v^{-1}(c_{\hat{a}}), v(r_{a^*} - r_{\hat{a}} + v^{-1}(c_{\hat{a}})) - c_{a^*}) \quad \text{with} \quad a^* = \argmax_{a \in [n]} v(r_a - r_{\hat{a}} + v^{-1}(c_{\hat{a}})) - c_a
\end{align*}
if the agent is risk-averse. For a graph illustration, see Figure~\ref{fig:optimization_af_risk_neutral} and \ref{fig:optimization_af_risk_averse} in Appendix~\ref{sec:opt_profile_af}.

Note that if the line $y = x$ has no intersection with the implementable set $\gF$, the two equality-based social utility function, egalitarian social welfare and approximated fairness, are equivalent in the sense that the optimal utility profile would be the same.

\section{Information Structure Design}\label{sec:info_struct_design}
Once the utility profile that maximizes the given social utility function is determined, the social planner should design an information structure such that it can induce an equilibrium for the Stackelberg game which leads to this utility profile.

We claim that no matter what risk attitude type of the agent, binary-signal information structure suffices for the design. 
Intuitively, designing an information structure is a way to compress the information of various actions. Since the principal designs the contract for each observable signal, actions belong to the same category of signals, i.e, $I_i = I_j$, yield the same expected transfer. 
Therefore, on the agent's side, only the cost $c_i$ affects the decision if two actions $i$ and $j$ share the same probability distribution over the signals.
Based on this observation, we can divide all the actions into two categories, one includes all the actions whose costs no less than the implemented action, while the other contains the remaining less costly actions.
In this way, the representative actions of the two categories are the desired action $a^*$ and $\hat{a}$, respectively. A ``high'' signal and a ``low'' signal suffice to distinguish these two actions and induce the equilibrium of the Stackelberg game towards the desired utility profile, whereas the probability mapping should be designed carefullly based on the risk attitude of the agent.

\paragraph{Risk-neutral agent}
When the agent is risk-neutral with utility function $u_a^A = t_a - c_a$, with a given utility file $(x^*, y^*)$ and its corresponding action-transfer pair $(a^*, s^*)$, the social planner design the information structure as follows:
\begin{align}\label{eq:info_struct_risk_neutral}
	I_i := \left\{
	\begin{array}{ll}
		(1, 0) & \text{if } i = a^* \text{ or } i \in \left[n\right] \ \st c_i > c_{a^*}, \\
		(p^*, 1 - p^*) & \text{otherwise,}
	\end{array}
	\right.
\end{align}
where $p^* := 1 - \frac{c_{a^*} - c_{\hat{a}}}{s^*}$.

\begin{thm}\label{thm:info_struct_risk_neutral}
	Suppose the principal and the agent are risk-neutral with utility functions given by $u_a^P = r_a - t_a$ and $u_a^A = t_a - c_a$, respectively. The information structure $I$ given by Eq.(\ref{eq:info_struct_risk_neutral}) induces the action-transfer pair $(a^*, s^*)$ and hence the utility profile $(x^*, y^*)$.
\end{thm}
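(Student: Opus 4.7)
The plan is to verify the claim by backward induction on the Stackelberg game induced by the information structure \(I\) in Eq.(\ref{eq:info_struct_risk_neutral}). The key structural observation is that \(I\) partitions the action set into a ``high'' group \(A_H := \{a^*\} \cup \{i \in [n] : c_i > c_{a^*}\}\) that sends the high signal with probability one, and a ``low'' group \(A_L\) containing \(\hat{a}\) (together with every other action whose cost is at most \(c_{a^*}\) and different from \(a^*\)) that draws signals from \((p^*, 1-p^*)\). Consequently, for any contract \((t_H, t_L) \geq 0\), the agent's expected transfer equals \(t_H\) if the chosen action lies in \(A_H\) and \(p^* t_H + (1-p^*) t_L\) if it lies in \(A_L\). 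Because the agent's utility within each group depends only on the cost, her best action in \(A_H\) is \(a^*\) (its unique cost-minimizer) and her best action in \(A_L\) is \(\hat{a}\) (the cost-minimizer with ties broken toward the maximum reward, exactly the selection rule built into the definition of \(\hat{a}\)).

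Next, I would focus on the principal's optimization conditional on inducing \(a^*\). The binding IC constraint \(t_H - c_{a^*} \geq p^* t_H + (1-p^*) t_L - c_{\hat{a}}\) simplifies, using \(1 - p^* = (c_{a^*} - c_{\hat{a}})/s^*\), to the clean linear inequality \(t_H - t_L \geq s^*\); together with limited liability \(t_L \geq 0\) and the IR condition \(t_H \geq c_{a^*}\) (which is automatic because the implementability of \((x^*, y^*)\) gives \(y^* = s^* - c_{a^*} \geq 0\) and hence \(s^* \geq c_{a^*}\)), minimizing the principal's payment \(t_H\) yields the unique optimum \((t_H, t_L) = (s^*, 0)\). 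This produces the action-transfer pair \((a^*, s^*)\) and the utility profile \((r_{a^*} - s^*, s^* - c_{a^*}) = (x^*, y^*)\).

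It remains to rule out that the principal would strictly prefer to induce an action other than \(a^*\). The only non-trivial competitor is \(\hat{a}\); a parallel linear program shows that the cheapest contract inducing \(\hat{a}\) is \(t_H = t_L = c_{\hat{a}}\), giving principal utility \(r_{\hat{a}} - c_{\hat{a}}\), while the default action yields \(0\). The implementability inequality \(x^* \geq \max\{0, r_{\hat{a}} - c_{\hat{a}}\}\) from Eq.(\ref{eq:implentable_set_risk_neutral}) then guarantees that the contract \((s^*, 0)\) is (weakly) principal-optimal, as desired.

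The main subtlety is the boundary case \(x^* = r_{\hat{a}} - c_{\hat{a}}\), where both contracts attain the principal's maximum; here the argument leans on the implementability of \(a^*\), namely \(r_{a^*} - c_{a^*} \geq r_{\hat{a}} - c_{\hat{a}}\), together with the standard convention that the agent breaks indifferences in the principal's favor, to conclude that at the contract \((s^*, 0)\) (where the agent is exactly indifferent between \(a^*\) and \(\hat{a}\)) she in fact selects \(a^*\). The remainder of the proof is routine algebra on the IC/IR constraints.
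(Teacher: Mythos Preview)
Your proposal is correct and follows essentially the same route as the paper: reduce to the three relevant actions $\{0,\hat a,a^*\}$ by the grouping argument, verify that the contract $(s^*,0)$ makes the agent indifferent between $\hat a$ and $a^*$ (so tie-breaking selects $a^*$), and then show no other contract can do better for the principal by bounding the payoff from inducing $0$ or $\hat a$ via implementability and deriving $t_H \geq s^*$ from the IC constraint. Your framing as a small LP and your explicit treatment of the boundary case $x^* = r_{\hat a} - c_{\hat a}$ are slightly cleaner than the paper's direct verification, but the underlying argument is identical.
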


\begin{proof}
	Firstly, under the designed information structure $I$, the agent only needs to consider a representative action for each category and hence the optimal action $i$ for a given contract must be one of $0, \hat{a}$, and $a^*$. Specifically, any $i$ with $c_i > c_{a^*}$ is classified in the same category as $a^*$ in the information structure, hence it would yield the same transfers as $a^*$ does, which leads to lower utility for the agent since $u_a^A = t_a - c_a$ and therefore is inferior to $a^*$. Furthermore, all the other actions are in the same category as action $\hat{a}$, while action $\hat{a}$ is the one with the lowest cost, and for the same reason as above, these actions are inferior and can be ignored.
	
	Next, we prove that with the information structure $I$ given by Eq.(\ref{eq:info_struct_risk_neutral}), the Stackelberg game equilibrium is the one with $(a^*, s^*)$ as the induced action and monetary transfer, and the optimal contract that leads to the equilibrium is $t^* = (t_1^*, t_2^*) = (s^*, 0)$.
	
	From the agent's perspective, we have that 
	\begin{align*}
		u_{0}^A &= 0, \\
		u_{\hat{a}}^A &= p^* \cdot s^* - c_{\hat{a}} = s^* - c_{a^*}, \\
		u_{a^*}^A &= s^* - c_{a^*},
	\end{align*}
	 which implies that $u_{\hat{a}}^A = u_{a^*}^A \geq u_{0}^A$. And from the principal's perspective, 
	 \begin{align*}
	 	u_{0}^P &= 0, \\
	 	u_{\hat{a}}^P &= r_{\hat{a}} - p^* \cdot s^* = r_{\hat{a}} - s^* + c_{a^*} - c_{\hat{a}}, \\
	 	u_{a^*}^P &= r_{a^*} - s^*.
	 \end{align*}
	 Therefore, $u_{\hat{a}}^P - u_{a^*}^P = (r_{\hat{a}} - c_{\hat{a}}) - (r_{a^*} - c_{a^*}) \leq 0$, where the inequality derives from the implementabiltiy of the action $a^*$.
	 By the tie-breaking rule, the agent would choose action $a^*$ which is in the favor of the principal, and the principal's utility is $r_{a^*} - s^*$. In the following, we show that this is the maximal utility that the principal can ensure and hence $t^*$ is the optimal contract.
	
	On one hand, any contract that produces action 0 or $\hat{a}$ would not be better than the proposed contract $t^*$. To see this, note that if the principal releases a contract that induces action 0, then she will receive a utility of 0, while by the selection rule of $s^*$, we have $r_{a^*} - s^* \geq 0$, which implies that the above contract is no better than $t^*$. In addition, if the principal releases a contract that induces action $\hat{a}$, then she must have an expected transfer $t_1 \geq c_{\hat{a}}$, otherwise the agent would have negative utility under action $\hat{a}$ and chooses action 0 instead. The principal would get a utility of $r_{\hat{a}} - t_1 \leq r_{\hat{a}} - c_{\hat{a}} = w_1 \leq r_{a^*} - s^*$, where the last inequality stems from the selection of $s^*$. Therefore, these contracts are inferior to the contract $t^*$. 
	
	On the other hand, if the principal releases a contract $\tilde{t} = (\tilde{t}_1, \tilde{t}_2)$ inducing action $a^*$ but with different transfers from $t^*$, then $\tilde{t}$ must satisfy 
	\begin{align*}
		\begin{split}
			\tilde{t}_1 - c_{a^*} &\geq \tilde{t}_1 - c_{\hat{a}}, \\
			\tilde{t}_1 - c_{a^*} &\geq p \tilde{t}_1 + (1 - p) \tilde{t}_2 - c_{\hat{a}}, 
		\end{split}
	\end{align*}
	otherwise, the agent would choose action $\hat{a}$ instead. As $p \in \left[0, 1\right]$ and by our assumption that $\tilde{t}_j \geq 0$, we have $\tilde{t}_1 - c_{a^*} \geq p \tilde{t}_1 - c_{\hat{a}}$, which implies that $\tilde{t}_1 \geq \frac{c_{a^*} - c_{\hat{a}}}{1 - p} = s^*$. Therefore, under the information structure $I$ determined by the social planner, $s^*$ is the minimum transfer that the principal could set to induce the action $a^*$, and hence $t^*$ is the optimal contract as desired.
\end{proof}

\paragraph{Risk-averse agent}
When the agent is risk-averse, i.e., the von Neumann-Morgenstern utility $v: \sR_{\geq 0} \rightarrow \sR_{\geq 0}$ of the agent is a concave function, and $v$ is twice differentiable~\citep{nielsen1999differentiable, nakamura2015differentiability}, strictly increasing with $v(0) = 0$ and $\lim_{z \rightarrow \infty} \frac{v(z)}{z} = 0$. 
Let $z^* \in \sR_{\geq 0}$ be the solution to the equation $\frac{v(z)}{z} = \frac{y + c_{a^*}}{r_{a^*} - x}$.
Given an implementable utility profile $(x^*, y^*)$ and its corresponding action-transfer pair $(a^*, s^*)$, set 
\begin{align*}
	p^* := \frac{r_{a^*} - x^*}{z^*} \frac{y^* + c_{\hat{a}}}{y^* + c_{a^*}} \quad \text{and} \quad q^* := \frac{r_{a^*} - x^*}{z^*}.
\end{align*}
the social planner design the information structure as follows.:
\begin{align}\label{eq:info_struct_risk_averse}
	I_i := \left\{
	\begin{array}{ll}
		(q^*, 1 - q^*) & \text{if } i = a^* \text{ or } i \in \left[n\right] \ \st c_i > c_{a^*}, \\
		(p^*, 1 - p^*) & \text{otherwise.}
	\end{array}
	\right.
\end{align}

\begin{thm}\label{thm:info_struct_risk_averse}
	Suppose the principal and the agent are risk-neutral and risk-averse, respectively. The information structure $I$ given by Eq.(\ref{eq:info_struct_risk_averse}) induces the action-transfer pair $(a^*, s^*)$ and hence the utility profile $(x^*, y^*)$.
\end{thm}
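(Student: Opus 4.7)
My plan is to imitate the three-stage structure of the proof of Theorem~\ref{thm:info_struct_risk_neutral}, with Jensen's inequality and the monotonicity of $z \mapsto v(z)/z$ playing the roles that linearity of utility played there. First, the binary-signal design reduces the agent's effective action set to $\{0, \hat{a}, a^*\}$: every $i$ with $c_i > c_{a^*}$ is strictly dominated by $a^*$, and every $i \neq \hat{a}$ with $c_i \leq c_{a^*}$ is weakly dominated by $\hat{a}$. Second, I would propose the contract $t^* = (z^*, 0)$ and verify that it renders the agent exactly indifferent between $a^*$ and $\hat{a}$, both yielding utility $y^*$. Third, I would apply tie-breaking (after showing the principal prefers $a^*$) and verify that no other contract improves the principal's payoff.

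For the second step, the defining identity $v(z^*) = z^*(y^* + c_{a^*})/(r_{a^*} - x^*)$ gives $q^* v(z^*) = y^* + c_{a^*}$ and $p^* v(z^*) = y^* + c_{\hat{a}}$ by direct computation, so $u_{a^*}^A = u_{\hat{a}}^A = y^* \geq 0 = u_0^A$. The tie-breaking step then requires $u_{a^*}^P \geq u_{\hat{a}}^P$, that is, $x^* \geq r_{\hat{a}} - p^* z^*$. This rearranges to $(r_{\hat{a}} - x^*)(y^* + c_{a^*}) \leq (r_{a^*} - x^*)(y^* + c_{\hat{a}})$, which is immediate when $x^* \geq r_{\hat{a}}$. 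The remaining case $x^* < r_{\hat{a}}$ is where I expect the main technical obstacle: I would chain three inequalities, namely the $F_{a^*}$-membership $y^* + c_{a^*} \leq v(r_{a^*} - x^*)$; the $\gF$-membership bound $v(r_{\hat{a}} - x^*) \leq c_{\hat{a}} \leq y^* + c_{\hat{a}}$ (which follows from $r_{\hat{a}} - x^* \leq v^{-1}(c_{\hat{a}})$ and the monotonicity of $v$); and the concavity inequality $(r_{\hat{a}} - x^*)\, v(r_{a^*} - x^*) \leq (r_{a^*} - x^*)\, v(r_{\hat{a}} - x^*)$ coming from $v(z)/z$ being non-increasing on $(0, \infty)$. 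This is the only step of the proof that requires both halves of implementability simultaneously.

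For the final step, I check principal-optimality by bounding the utility from any contract that induces each of the three effective actions. A contract inducing $0$ gives the principal at most $0 \leq x^*$. A contract inducing $\hat{a}$ must satisfy $p v(t_1) + (1 - p) v(t_2) \geq c_{\hat{a}}$, whence Jensen's inequality forces the expected transfer $p t_1 + (1 - p) t_2 \geq v^{-1}(c_{\hat{a}})$, bounding the principal's utility by $r_{\hat{a}} - v^{-1}(c_{\hat{a}}) \leq x^*$. A contract inducing $a^*$ must satisfy the IC constraint against $\hat{a}$, which after substituting $(q^* - p^*) v(z^*) = c_{a^*} - c_{\hat{a}}$ reduces to $v(t_1) - v(t_2) \geq v(z^*)$; writing $t_1 \geq v^{-1}(v(t_2) + v(z^*))$ and differentiating $g(t_2) := q^* v^{-1}(v(t_2) + v(z^*)) + (1 - q^*) t_2$, one finds $g'(t_2) \geq 1$ because $v'$ is non-increasing, so the minimum expected transfer is $g(0) = q^* z^* = r_{a^*} - x^*$, attained at $t^*$ and giving the principal utility exactly $x^*$. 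Combining the three cases shows that $t^*$ is the principal's best response under $I$, so the Stackelberg equilibrium realizes the action-transfer pair $(a^*, s^*)$ with $s^* = r_{a^*} - x^*$ and the utility profile $(x^*, y^*)$, as claimed.
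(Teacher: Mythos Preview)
Your proposal is correct and follows the same three-stage skeleton as the paper's proof --- reduce to $\{0,\hat a,a^*\}$, propose $t^*=(z^*,0)$, then rule out alternative contracts --- but two technical moves differ. First, you explicitly verify the tie-breaking direction $u^P_{a^*}\ge u^P_{\hat a}$ at $t^*$; the paper simply asserts the agent breaks toward $a^*$ without checking the principal actually prefers it, so your chain via $y^*+c_{a^*}\le v(r_{a^*}-x^*)$, $v(r_{\hat a}-x^*)\le c_{\hat a}$, and the monotonicity of $v(z)/z$ fills a genuine omission (your concavity step silently uses $r_{\hat a}\le r_{a^*}$, which does hold since implementability gives $r_{a^*}-r_{\hat a}\ge v^{-1}(c_{a^*})-v^{-1}(c_{\hat a})\ge 0$). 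Second, for optimality among $a^*$-inducing contracts you minimize the expected transfer $g(t_2)=q^*v^{-1}(v(t_2)+v(z^*))+(1-q^*)t_2$ by calculus, whereas the paper argues more elementarily: reduce $\tilde t_2$ to $0$ (the IC inequality is preserved), then reduce $\tilde t_1$ until IC binds, forcing $\tilde t_1=z^*$. Both routes land at the same optimum. The one item present in the paper but absent from your plan is the preliminary well-definedness check: that the equation $v(z)/z=(y^*+c_{a^*})/(r_{a^*}-x^*)$ has a unique solution $z^*$ and that $0\le p^*\le q^*\le 1$, so that $I$ is a bona fide information structure; you should add a sentence covering this.
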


\begin{proof}
	Firstly, we prove that the equation $\frac{v(z)}{z} = \frac{y + c_{a^*}}{r_{a^*} - x}$ admits a unique solution and $p^*, q^* \in \left[0, 1\right]$ so that the information structure in Eq.(\ref{eq:info_struct_risk_averse}) is well-defined.
	
	On one hand, consider the function $f(z) = \frac{v(z)}{z}$, we have $f'(z) = \frac{v'(z)z - v(z)}{z^2}$. Let $g(z) = v'(z)z - v(z)$, then $g'(z) = v''(z) z$. By the fact that $v(\cdot)$ is a strictly concave function and it is defined on $\sR_{\geq 0}$, we have $g'(z) \leq 0$ with the equality holds when $z = 0$ and hence $g(z) < g(0) = 0$ for any $z < 0$. Therefore, $f'(z) \leq 0$ with the equality holds when $z = 0$ and $f(z)$ is strictly monotonically decreasing on $\sR_+$.
	
	On the other hand, denote the right-hand-side derivative of $v$ at 0 as $d_0 := v_+'(0)$, then by L'H\^{o}pital's rule we have $\lim_{z \rightarrow 0+} \frac{v(z)}{z} = d_0$ and by assumption $\lim_{z \rightarrow \infty} \frac{v(z)}{z} = 0$. Since $v(\cdot)$ is strictly concave, we have 
	\begin{align}\label{eq:concavity_v}
		v(r_{a^*} - x^*) \leq d_0 \cdot (r_{a^*} - x^*).
	\end{align}
	From the characterization of implementable utility profile, we have that 
	\begin{align}\label{eq:y_implementability}
		y^* \leq v(r_{a^*} - x^*) - c_{a^*}.
	\end{align}
	Combining Eq. (\ref{eq:concavity_v}) and (\ref{eq:y_implementability}) and the characterization of the implementable utility profile, we have 
	\begin{align*}
		0 \leq \frac{y^* + c_{a^*}}{r_{a^*} - x^*} \leq d_0.
	\end{align*}
	By the intermediate value theorem and the monotonicity of $v(\cdot)$, we have that the equation $\frac{v(z)}{z} = \frac{y^* + c_{a^*}}{r_{a^*} - x^*}$ admits a solution and further the solution is unique.
	
	By the implementability of $(x^*, y^*)$ and $c_{a^*} \geq c_{\hat{a}}$, we have $0 \leq p^* \leq q^*$. 
	We prove $q^* \leq 1$ by contradiction, i.e., assume $r_a - x^* > z^*$, then by the monotonicity of $\frac{v(z)}{z}$ and Eq. (\ref{eq:y_implementability}), we have
	\begin{align*}
		\frac{v(z^*)}{z^*} > \frac{v(r_{a^*} - x^*)}{r_{a^*} - x^*} \geq \frac{y^* + c_{a^*}}{r_{a^*} - x^*},
	\end{align*}
	which contradicts with the fact that $\frac{v(z^*)}{z^*} = \frac{y^* + c_{a^*}}{r_{a^*} - x^*}$.
	
	Secondly, similar to the case with a risk-neutral agent, any actions $i \notin \left\{0, \hat{a}, a^*\right\}$ are suboptimal for the risk-averse agent under any contract, since action $\hat{a}$ and action $a^*$ are the least costly actions among their categories, respectively, and the distribution over transfers are the same within the same information category. Therefore, we can focus on the actions $\left\{0, \hat{a}, a^*\right\}$.
	
	Next, we prove that with the information structure $I$ defined as in Eq. (\ref{eq:info_struct_risk_averse}), the optimal contract that leads to the equilibrium is $t^* = (t_1^*, t_2^*) = (z^*, 0)$.
	
	From the agent's perspective, her expected utility with actions 0, $\hat{a}$ and $a^*$ are
	\begin{align*}
		u_{0}^{A} &= 0, \\
		u_{\hat{a}}^{A} &= p^* \cdot v(z^*) - c_{\hat{a}} = \frac{r_{a^*} - x^*}{z^*} \frac{y^* + c_{\hat{a}}}{y^* + c_{a^*}} \cdot v(z^*) - c_{\hat{a}} = \frac{(r_{a^*} - x^*) \cdot (y^* + c_{\hat{a}})}{y^* + c_{a^*}} \frac{y^* + c_{a^*}}{r_{a^*} - x^*} - c_{\hat{a}} = y^*, \\
		u_{a^*}^{A} &= q^* \cdot v(z^*) - c_{a^*} = \frac{r_{a^*} - x^*}{z^*} \cdot v(z^*) - c_{a^*} = (r_{a^*} - x^*) \frac{y^* + c_{a^*}}{r_{a^*} - x^*} - c_{a^*} = y^*, 
	\end{align*}
	where the second equality for the calculation of $u_{\hat{a}}^A$ and $u_{a^*}^A$ are from the definition of $z^*$.
	From the above calculation, we find that $u_{\hat{a}}^A = u_{a^*}^A \geq u_{0}^A$, 
	By the tie-breaking rule, the agent would choose the action in the principal's favor and hence chooses action $a^*$. 
	Therefore, the contract $t^* = (z^*, 0)$ induces the agent's action $a^*$, and we need to show that it is optimal from the principal's perspective.
	
	On one hand, any contract that induces action 0 or $\hat{a}$ would be dominated by the proposed contract $t^*$. By the implementability of the utility profile $(x^*, y^*)$, we have $x^* \geq 0$, which implies that the principal would prefer contract $t^*$ that provides her with a utility of $x^*$ instead of 0 from action 0. Additionally, among all the contracts inducing action $\hat{a}$, the optimal one is $(v^{-1}(c_{\hat{a}}), v^{-1}(c_{\hat{a}}))$, which offers the agent the minimal subsidy while maintaining the incentive.
	Under such contract, the principal receives a utility of $r_{\hat{a}} - v^{-1}(c_{\hat{a}}) \leq x^*$, where the inequality comes from the implementability of $x^*$ and hence the principal would prefer the contract $t^*$.
	
	On the other hand, consider a contract $\tilde{t} = (\tilde{t}_1, \tilde{t}_2)$ inducing action $a^*$ but with different transfers from $t^*$, then $\tilde{t}$ must satisfy
	\begin{align*}
		u_{a^*}^A(\tilde{t}) \geq u_1^A(\tilde{t}) \quad \text{and} \quad u_{a^*}^A(\tilde{t}) \geq 0,
	\end{align*}
	otherwise the agent would choose action $\hat{a}$ or 0. Actually, the second condition can be withdrawn as it can be derived from the first one. To see this, assume that $u_{a^*}^A(\tilde{t}) \geq u_{\hat{a}}^A(\tilde{t})$, then by the definition of $u_{a^*}^A(\tilde{t})$ and $u_{\hat{a}}^A(\tilde{t})$, we have the following equivalent inequality
	\begin{align*}
		q^* \cdot v(\tilde{t}_1) + (1 - q^*) \cdot v(\tilde{t}_2) - c_{a^*} \geq p^* \cdot v(\tilde{t}_1) + (1 - p^*) \cdot v(\tilde{t}_2) - c_{\hat{a}}.
	\end{align*}
	By rearrangement, we further have the following equivalent expression
	\begin{align}\label{eq:value_ineq1}
		v(\tilde{t}_1)  \geq v(\tilde{t}_2) + \frac{c_{a^*} - c_{\hat{a}}}{q^* - p^*}
	\end{align}
	Then we have
	\begin{align}\label{eq:value_ineq2}
		\begin{split}
			u_{a^*}^A(\tilde{t}) &= q^* \cdot v(\tilde{t}_1) + (1 - q^*) \cdot v(\tilde{t}_2) - c_{a^*} \\
			&\geq q^* \cdot \left(v(\tilde{t}_2) + \frac{c_{a^*} - c_{\hat{a}}}{q^* - p^*}\right) + (1 - q^*) \cdot v(\tilde{t}_2) - c_{a^*} \\
			&= v(\tilde{t}_2) + \frac{p^*}{q^* - p^*} c_{a^*} + \frac{q^*}{q^* - p^*} c_{\hat{a}} \\
			&= v(\tilde{t}_2) + \frac{y^* + c_{\hat{a}}}{c_{a^*} - c_{\hat{a}}} c_{a^*} - \frac{y^* + c_{a^*}}{c_{a^*} - c_{\hat{a}}} \\
			&= v(\tilde{t}_2) + y^* \geq 0,
		\end{split}
	\end{align}
	where the first inequality follows from Eq.(\ref{eq:value_ineq1}) and the last inequality follows from the fact that $v$ is a nonnegative function and the implementability of $y^*$ implies $y^* \geq 0$.
	
	Therefore, the only requirement for $\tilde{t}$ is Eq.(\ref{eq:value_ineq1}). If $\tilde{t}_2 > 0$, slightly reducing $\tilde{t}_2$ preserves the inequality since $v$ is strictly increasing, which implies that a contract with $\tilde{t}_2 > 0$ is suboptimal for the principal. 
	Therefore, we only need to consider $\tilde{t} = (\tilde{t}_1, 0)$.
	
	If $u_{a^*}^A(\tilde{t}) > u_{1}^A(\tilde{t})$, the inequalities in Eq.(\ref{eq:value_ineq1}) and (\ref{eq:value_ineq2}) are strict and hence the principal can reduce the value of $\tilde{t}_1$ to some extent while still preserve the inequality. This implies that we must have $u_{a^*}^A(\tilde{t}) = u_{\hat{a}}^A(\tilde{t})$, i.e., $v(\tilde{t}_1) = \frac{c_{a^*} - c_{\hat{a}}}{q^* - p^*}$ as $\tilde{t}_2 = 0$ and $v(0) = 0$. Substitute the value of $p^*$ and $q^*$ into the expression of $v(\tilde{t}_1)$, we have
	\begin{align*}
		v(\tilde{t}_1) = (c_{a^*} - c_{\hat{a}}) \cdot \frac{(y + c_{a^*}) \cdot z^*}{(r_{a^*} - x^*) \cdot (c_{a^*} - c_{\hat{a}})} = z^* \cdot \frac{y + c_{a^*}}{r_{a^*} - x^*},
	\end{align*}
	which implies that $v(\tilde{t}_1) = v(z^*)$. Since $v$ is strictly increasing, we must have $\tilde{t}_1 = z^*$, the unique solution to the equation $\frac{v(z)}{z} = \frac{y + c_{a^*}}{r_{a^*} - x^*}$. In other words, $t^* = (z^*, 0)$ is the unique optimal contract that under the information structure $I$ as given by Eq.(\ref{eq:info_struct_risk_averse}).
\end{proof}

\section{Extension to Risk-Averse Principal}\label{sec:risk_averse_principal}
If the principal is risk-averse with utility function $u(x)$ over the actual income $x = r_k - t_j$, then the expected utility function for the principal is 
\begin{align}\label{eq:utility_risk_averse_principal}
	u_a^P = \expect_{k \sim P_a} \expect_{j \sim I_a} u(r_k - t_j), 
\end{align}
where $u(\cdot)$ is a concave, increasing function with $u(0) = 0$.

For simplicity, we consider the agent is risk neutral over her income and hence the expected utility function is $u_a^A = \expect_{j \sim I_a} t_j - c_a := t_a - c_a$.
By the concavity of function $u(\cdot)$, we have
\begin{align*}
	\begin{split}
		u_a^P &= \expect_{k \sim P_a} \expect_{j \sim I_a} u(r_k - t_j) \\
		&\leq u \left(\expect_{k \sim P_a} \expect_{j \sim I_a} r_k - t_j\right) \\
		&= u \left(r_a - u_a^A - c_a\right).
	\end{split}
\end{align*}
By the monotonicity of $u(\cdot)$, we have $u_a^A \leq -u^{-1}(u_a^P) + r_a - c_a$.
Graphical representation of the relationship between $u_a^P$ and $u_a^A$ is shown in Figure~\ref{fig:extension_risk_averse} in Appendix~\ref{sec:profile_risk_averse_principal}, which is similar as that of a risk-neutral principal and risk-averse agent. 
Therefore, the methodology in Section~\ref{sec:utility_profile_determination} and \ref{sec:info_struct_design} can be naturally extended to the case with a risk-averse principal.

\section{Conclusion}\label{sec:discuss}
In this paper, we have considered the principal-agent problem with a third party that we called the social planner. The social planner can control the information flow between the principal and the agent, and aims to induce socially optimal outcome for the system.
We have devised a workflow for the social planner. First, with a specific social utility function, the social planner solves a constrained optimization problem to determine the optimal utility profile. Because the social planner is faced with a system consisting of only two players, in most cases the optimization problem can be easily solved by a geometric approach, i.e., by graphing the utility function of the players to figure out the implementable sets and then using the contour line of the objective social utility function to determine the optimal utility profile.
Secondly, having the optimal utility profile in mind, we have provided a simple binary-signal information structure for the cases where agent with different risk attitudes, i.e., risk-neutral and risk-averse. Under the designed information structure, the Stackelberg game is guaranteed to arrive at an equilibrium of the desired utility profile, which in turns would maximize the social utility function and satisfy the purpose of the social planner. 
Our method is simple and modular, and the result shows that information plays a key role in social planning in the principal-agent model.

\bibliographystyle{plainnat}
\bibliography{reference}

\appendix

\section{Calculation of Utility Profile for Nash Product with a Risk-Averse Agent}\label{sec:np_risk_averse}
In this section, we prove that Eq. (\ref{eq:utility_profile_np_risk_averse}) provides the maximum Nash product for any $(u_a^P, u_a^A) \in F_a \cap \gF$.

Firstly, without loss of generality, we assume that action $a$ is implementable (otherwise $F_a \cap \gF = \emptyset$), then we have $r_a - v^{-1}(c_a) \geq \max \left\{0, r_{\hat{a}} - v^{-1}(c_{\hat{a}})\right\}$.
On one hand, as is stated in Section~\ref{subsubsec:overall_optimization}, the utility profile that attains the maximum Nash product could only appear on the boundary of the implementable set, i.e., $y = v(r_a - x) - c_a$, we can focus on this function. 
On the other hand, if a utility profile $(u^P, u^A)$ achieves a Nash product $z$, then the relationship between $u^A$ and $u^P$ can be described by a function $y = \frac{z}{x}$.

Intuitively speaking, by changing the value of $z$, the equation $v(r_a - x) - c_a = \frac{z}{x}$ would have 0, 1 or 2 roots. When the equation has unique root, the corresponding $z$ is the maximum Nash product, and the root is the utility of the principal at the equilibrium.
In particular, when $z$ is given, let
\begin{align*}
	f(x) = \frac{z}{x} - v(r_a - x) + c_a,
\end{align*}
then $f(x) \leq 0$ implies that there exists an implementable utility profile where the agent's action at the equilibrium is $a$ and the Nash product of this utility profile is $z$. 
We can observe the following two facts:
\begin{enumerate}
	\item $f'(x) = - \frac{z}{x^2} + v'(r_a - x)$ and $f''(x) = \frac{2z}{x^3} - v''(r_a - x) \geq 0$ by the concavity of $v$, which implies that $f(x)$ is a convex function. Furthermore, when $x \to 0$, we have $f(x) \to +\infty$ and $f'(x) \to -\infty$, and when $x_a = r_a - v^{-1}(c_a)$, we have $f(x_a) = \frac{z}{x_a} > 0$, $f'(x_a) = - \frac{z}{x_a^2} + v'(v^{-1}(c_a))$. If $z$ is sufficiently small, we would have $f'(x_a) \geq 0$ and it further implies that $f(x)$ is unimodal with a unique minimum in the interval $\left[0, r_a - v^{-1}(c_a)\right]$. And $\min_{x \in \left[0, r_a - v^{-1}(c_a)\right]} f(x) >0, =0, <0$ correspond to the case that the equation $v(r_a - x) - c_a = \frac{z}{x}$ has 0, 1, 2 roots.
	
	\item For every given $x > 0$, $\frac{z}{x} - v(r_a - x) + c_a$ is a strictly increasing linear function in terms of $z$, and hence increasing $z$ would uniformly increase the value of the function for every $x > 0$, i.e., the function curve would shift to the up. As a result, as $z$ approaches 0, the number of solutions of the equation would change from 0 to 1 to 2.
\end{enumerate}
Based on the above two facts, it suffices to prove that there exists $z$, such that the equation $\frac{z}{x} - v(r_a - x) + c_a$ has only one root, the corresponding $z$ would be the maximum attainable Nash product, if the root $x \in \left[\max \left\{0, r_{\hat{a}} - v^{-1}(c_{\hat{a}})\right\}, r_a - v^{-1}(c_a)\right]$. In the following Lemma~\ref{lem:np_max}, we prove this claim.

\begin{lem}\label{lem:np_max}
	Let $v: \sR_{\geq 0} \rightarrow \sR_{\geq 0}$ be a strictly increasing concave function that satisfies $v(0) = 0$ and $\lim_{z \to \infty} \frac{v(z)}{z} = 0$, 
	and let $r$ and $c$ be a pair of given nonnegative real numbers satisfy $r - v^{-1}(c) \geq 0$, 
	there exists $z \geq 0$, such that the equation $\frac{z}{x} = v(r - x) - c$ has only one root on the interval $0 \leq x \leq r - v^{-1}(c)$.
\end{lem}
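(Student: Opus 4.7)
The plan is to identify $z$ as the maximum value of the product function $g(x) := x \cdot (v(r-x) - c)$ on the interval $[0, r - v^{-1}(c)]$. I first observe that $g(0) = 0$ and, since $v(v^{-1}(c)) = c$, also $g(r - v^{-1}(c)) = 0$; while for every $x$ in the open interior, strict monotonicity of $v$ yields $v(r-x) > v(v^{-1}(c)) = c$, so $g(x) > 0$. If $r = v^{-1}(c)$ the interval degenerates to a point and $z = 0$ trivially works, so I may assume $r > v^{-1}(c)$. Continuity and compactness then produce an interior maximizer $x^* \in (0, r - v^{-1}(c))$ with $z^* := g(x^*) > 0$.

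Setting $z = z^*$, the equation $z^*/x = v(r-x) - c$ is equivalent to the vanishing of
\[
f(x) := \frac{z^*}{x} - v(r - x) + c, \qquad x \in (0, r - v^{-1}(c)].
\]
By construction $f(x^*) = 0$, and for every $x \in (0, r - v^{-1}(c)]$ the inequality $g(x) \leq z^*$ rearranges (dividing by $x > 0$) to $v(r-x) - c \leq z^*/x$, i.e., $f(x) \geq 0$. Hence $x^*$ is a global minimizer of $f$ attaining the value $0$.

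To show $x^*$ is the only root, I would establish strict convexity of $f$. The term $z^*/x$ is strictly convex on $(0, \infty)$ since its second derivative $2z^*/x^3$ is strictly positive (using $z^* > 0$), while $-v(r - x) + c$ is convex by concavity of $v$. Their sum $f$ is therefore strictly convex on $(0, r - v^{-1}(c)]$, so it has a unique minimizer; combined with the previous paragraph this forces $x^*$ to be the unique zero of $f$ on that interval. The point $x = 0$ is excluded from the domain of the original equation (where $z^*/x$ is undefined), so $x^*$ is the unique root of $z^*/x = v(r-x) - c$ on the closed interval $[0, r - v^{-1}(c)]$.

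The conceptual content reduces to the two observations that (i) $g$ attains a strictly positive maximum in the open interior of its domain, and (ii) $f$ is strictly convex. Neither step requires $v$ to be strictly concave; strict convexity of $z^*/x$ alone, valid once we know $z^* > 0$, is what drives uniqueness. The only genuine subtlety is the degenerate case $r = v^{-1}(c)$, which is dispensed with by taking $z = 0$, and one should verify that the maximizer of $g$ truly lies strictly inside the interval, which is forced by $g$ vanishing at both endpoints and being strictly positive in between.
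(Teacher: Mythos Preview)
Your proof is correct and shares the paper's core idea: choose $z$ to be the maximum of the product $x\bigl(v(r-x)-c\bigr)$ on the relevant interval, then argue the equation has a unique root at the maximizer. The paper arrives at the same $z$ by rewriting the equation as $x\,v(r-x)=cx+z$ and finding the tangent line of slope $c$ to the concave function $x\mapsto x\,v(r-x)$; uniqueness then comes from strict concavity of that product, established via its second derivative (which uses $v''$).

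Your uniqueness argument is genuinely different and a bit more robust: rather than showing $x\,v(r-x)$ is strictly concave, you observe that $z^*/x$ is strictly convex whenever $z^*>0$, and adding the merely convex term $-v(r-x)+c$ preserves strict convexity. This avoids differentiating $v$ altogether and, as you note, does not require $v$ to be strictly concave. The paper's tangent-line picture is perhaps more geometric, but your route is cleaner analytically and works under weaker hypotheses on $v$. One small quibble: in the degenerate case $r=v^{-1}(c)$ the interval collapses to $\{0\}$, where $z/x$ is undefined, so ``$z=0$ trivially works'' is not quite right; however, the paper glosses over this edge case as well, and it is irrelevant for the application.
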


\begin{proof}
	We prove in the following that the conclusion is valid on the interval $0 \leq x \leq r$.
	If $\frac{z}{x} = v(r - x) - c$, then we have
	\begin{align}\label{eq:np_max_eq1}
		x \cdot v(r - x) - cx = z \quad \Rightarrow \quad x \cdot v(r - x) = cx + z.
	\end{align}
	For LHS of Eq.(\ref{eq:np_max_eq1}), let $f(x) = x \cdot v(r - x)$, then by the fact that $v$ is strictly increasing and $v(0) = 0$, we have that
	\begin{align*}
		f(x) > 0, &\quad x \in (0, r), \\
		f(x) = 0, &\quad x = 0 \text{ or } r.
	\end{align*}
	For the first order derivative $f'(x) = v(r - x) - x \cdot v'(r - x)$, 
	\begin{align*}
		f'(x) = \left\{
		\begin{array}{ll}
			v(r) > 0 & x = 0, \\
			-r \cdot v'(0) < 0 & x = r.
		\end{array}
		\right.
	\end{align*}
	Furthermore, for the second derivative, 
	\begin{align*}
		f''(x) &= -v'(r - x) - \left[v'(r - x) - x \cdot v''(r - x)\right] \\
		&= -2 v'(r - x) + x \cdot v''(r - x) < 0 \quad \text{if } x \in [0, r], 
	\end{align*}
	since $v$ is strictly increasing and concave. Therefore, $f'(x)$ is strictly decreasing on $[0, r]$ and there is a unique $x$ such that $f'(x) = 0$, which implies that $f(x)$ is concave and increases then decreases on $[0, r]$.
	
	For RHS of Eq.(\ref{eq:np_max_eq1}), denote $g(x) = cx + z$, then $g(x)$ is a linear function of $x$, whose slope is $c$. Therefore, to get the conclusion of this lemma, it suffices to prove that there exists $z > 0$, such that $g(x)$ is a tangent line to $f(x)$.
	
	From the assumption, we know that $r - v^{-1}(c) \geq 0$, i.e., $v(r) \geq c$. While on $[0, r]$, $f'(x)$ strictly decreasing from $v(r)$ to $-r \cdot v'(0)$, i.e., $v(r) \geq c \geq -r \cdot v'(0)$, then there exists some $x$ in $[0, r]$ such that $f'(x) = c$. Denote the corresponding $x$ as $x_c$, then let $z = f(x_c) - c \cdot x_c$, we have $g(x)$ is tangent to $f(x)$ at $x_c$. Furthermore, since $f(0) = 0, f'(0) = v(r) \geq c$, we have that the intercept $z \geq 0$.
\end{proof}

From the proof of Lemma~\ref{lem:np_max}, the calculation of the utility profile comes out naturally.
Firstly, we solve the equation $v(r_a - x) - x \cdot v'(r_a - x) = c_a$ and get the solution $x_a$ and let $y_a = v(r_a - x_c) - c_a$.
\begin{itemize}
	\item If $x_a \in \left[\max \left\{0, r_{\hat{a}} - v^{-1}(c_{\hat{a}})\right\}, r_a - v^{-1}(c_{\hat{a}})\right]$,
	the utility profile is $(u_a^P, u_a^A) = (x_a, y_a)$ and the maximum Nash product that is attainable for action $a$ is $x_a \cdot y_a$.
	
	\item If $r_{\hat{a}} - v^{-1}(c_{\hat{a}}) \leq 0$, then by Lemma~\ref{lem:np_max}, we know that $x_a \geq 0$ and the utility profile $(x_a, y_a)$ could be achieved. 
	
	\item However, if $r_{\hat{a}} - v^{-1}(c_{\hat{a}}) > 0$ and $0 \leq x_a \leq r_{\hat{a}} - v^{-1}(c_{\hat{a}})$, we have that $(x_a, y_a) \notin F_a \cap \gF$. In this case, we have to decrease $z$ such that there is some point on the curve $y = \frac{z}{x}$ lies in the implementable set. Denote $x_1 = r_{\hat{a}} - v^{-1}(c_{\hat{a}})$, the maximum $z$ such that $y = \frac{z}{x} \cap \left(F_a \cap \gF\right) \neq \emptyset$ is $x_1 \cdot \left(v(r_a - x_1) - c_a\right)$ and the corresponding utility profile is $(u_a^P, u_a^A) = (x_1, v(r_a - x_1) - c_a)$.
\end{itemize}

\section{Optimal Utility Profile for a Given Action with Utilitarian Social Welfare and Risk-Averse Agent}\label{sec:opt_profile_usf_averse}
Figure~\ref{fig:optimization_usf_risk_averse} illustrates the optimal utility profile for a given action $a$ if the social planner uses utilitarian social welfare (USF) as the social utility function and the agent is risk-averse.

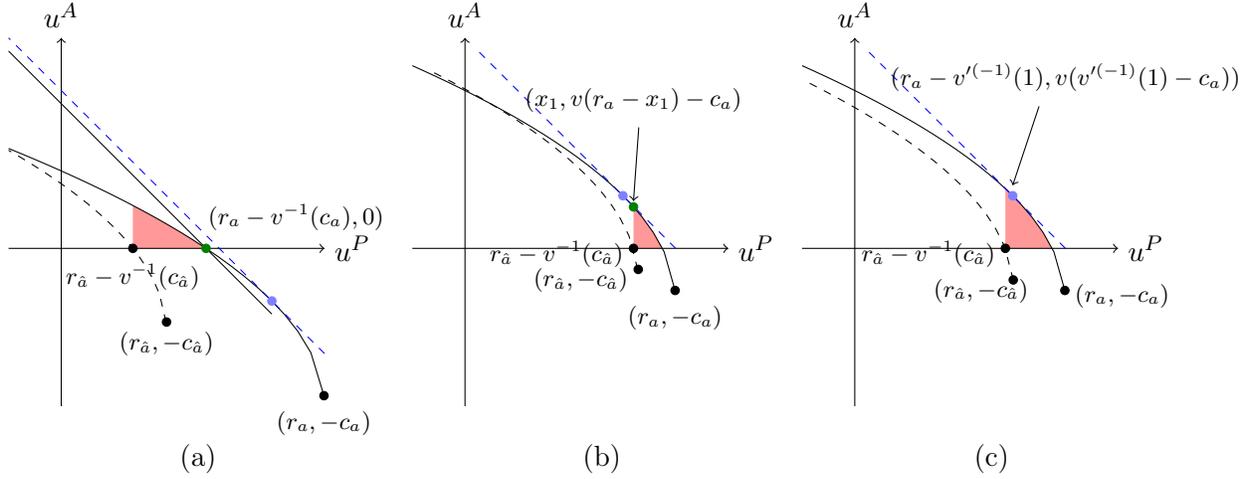
\begin{figure}[!htp]
	\centering
	\begin{tikzpicture}[xscale=0.7, yscale=1.4]
		\fill [red!40, domain = 2.251:3.64, variable = \x]
		(1.36, 0)--
		plot ({-\x+5}, {sqrt{\x} - 1.5})--cycle;
		
		\draw[->] (-1, 0)--(5,0);
		\draw[->] (0, -1.5)--(0,2);
		\node[right] at (5,0) {$u^P$};
		\node[above] at (0,2) {$u^A$};
		
		\draw[dashed, domain = 0.01:3, variable = \x]
		plot ({-\x+2}, {sqrt{\x} - 0.8});
		
		\draw[domain = 0.01:6, variable = \x]
		plot ({-\x+5}, {sqrt{\x} - 1.5});
		
		\filldraw (2, -0.7) ellipse (0.08 and 0.04);
		\node[below] at (2, -0.7) {\footnotesize $(r_{\hat{a}}, -c_{\hat{a}})$};
		
		\filldraw (4.99, -1.4) ellipse (0.08 and 0.04);
		\node[below] at (4.99, -1.45) {\footnotesize $(r_a, -c_a)$};
		
		\filldraw (1.36, 0) ellipse (0.08 and 0.04);
		\node[below] at (1.36, -0.05) {\footnotesize $r_{\hat{a}} - v^{-1}(c_{\hat{a}})$};
		
		\draw (4,-0.625)--(-1, 1.875);
		\draw [dashed, color = blue] (5,-1)--(-1,2);
		
		\filldraw [color = green!50!black, fill = green!50!black] (2.75, 0) ellipse (0.08 and 0.04);
		\node[right] at (2.6, 0.3) {\footnotesize $(r_a - v^{-1}(c_a), 0)$};
		
		\filldraw [color = blue!50, fill = blue!50] (4, -0.5) ellipse (0.08 and 0.04);
		
		\node at (2.6, -2) {(a)};
	\end{tikzpicture}
	\hspace{0mm}
	\begin{tikzpicture}[xscale=0.7, yscale=1.4]
		\fill [red!40, domain = 0.25:0.8, variable = \x]
		(3.2, 0)--
		plot ({-\x+4}, {sqrt{\x} - 0.5})--cycle;
		
		\draw[->] (-1, 0)--(5,0);
		\draw[->] (0, -1.5)--(0,2);
		\node[right] at (5,0) {$u^P$};
		\node[above] at (0,2) {$u^A$};
		
		\draw[dashed, domain = 0.01:4, variable = \x]
		plot ({-\x+3.29}, {sqrt{\x} - 0.3});
		
		\draw[domain = 0.01:5, variable = \x]
		plot ({-\x+4}, {sqrt{\x} - 0.5});
		
		\filldraw (3.29, -0.2) ellipse (0.08 and 0.04);
		\node[left] at (3.29, -0.3) {\footnotesize $(r_{\hat{a}}, -c_{\hat{a}})$};
		
		\filldraw (3.99, -0.4) ellipse (0.08 and 0.04);
		\node[below] at (3.99, -0.45) {\footnotesize $(r_a, -c_a)$};
		
		\filldraw (3.2, 0) ellipse (0.08 and 0.04);
		\node[left] at (3.2, -0.05) {\footnotesize $r_{\hat{a}} - v^{-1}(c_{\hat{a}})$};
		
		\draw [dashed, color = blue] (4,0)--(0.2,1.9);
		
		\filldraw[color = blue!50, fill = blue!50] (3,0.5) ellipse (0.08 and 0.04);
		
		\filldraw [color = green!50!black, fill = green!50!black] (3.2, 0.394) ellipse (0.08 and 0.04);
		\node [above] at (3.2, 1.2) {\footnotesize $(x_1, v(r_a - x_1) - c_a)$};
		\draw[->] (3.3, 1.15)--(3.2, 0.45);
		
		
		\node at (2.6, -2) {(b)};
	\end{tikzpicture}
	\hspace{0mm}
	\begin{tikzpicture}[xscale=0.7, yscale=1.4]
		\fill [red!40, domain = 0.25:1.14, variable = \x]
		(2.86, 0)--
		plot ({-\x+4}, {sqrt{\x} - 0.5})--cycle;
		
		\draw[->] (-1, 0)--(5,0);
		\draw[->] (0, -1.5)--(0,2);
		\node[right] at (5,0) {$u^P$};
		\node[above] at (0,2) {$u^A$};
		
		\draw[dashed, domain = 0.01:4, variable = \x]
		plot ({-\x+3.02}, {sqrt{\x} - 0.4});
		
		\draw[domain = 0.01:5, variable = \x]
		plot ({-\x+4}, {sqrt{\x} - 0.5});
		
		\filldraw (3.01, -0.3) ellipse (0.08 and 0.04);
		\node[left] at (3.49, -0.4) {\footnotesize $(r_{\hat{a}}, -c_{\hat{a}})$};
		
		\filldraw (3.99, -0.4) ellipse (0.08 and 0.04);
		\node[right] at (3.99, -0.45) {\footnotesize $(r_a, -c_a)$};
		
		\filldraw (2.86, 0) ellipse (0.08 and 0.04);
		\node[left] at (2.86, -0.05) {\footnotesize $r_{\hat{a}} - v^{-1}(c_{\hat{a}})$};
		
		\draw [dashed, color = blue] (4,0)--(0.2,1.9);
		
		\filldraw[color = blue!50, fill = blue!50] (3,0.5) ellipse (0.08 and 0.04);
		
		\node[above] at (4, 1.4) {\footnotesize $(r_a - v'^{(-1)}(1), v(v'^{(-1)}(1) - c_a))$};
		\draw[->] (3.5,1.35)--(3,0.6);
		
		\node at (2.6, -2) {(c)};
	\end{tikzpicture}
	\caption{Optimal utility profile for a given action $a$ if the social planner uses utilitarian social welfare (USF) as the social utility function and the agent is risk-averse. Let $x_1 = r_{\hat{a}} - v^{-1}(c_{\hat{a}})$. (a) If $v'(v^{-1}(c_a)) \leq 1$, the green point represents the optimal utility profile. (b) If $r_{\hat{a}} - v^{-1}(c_{\hat{a}}) \geq 0$ and $v'(r_a - x_1) > 1$, the green point represents the optimal utility profile. (c) If the line $y = x$ is tangent to $y = v(r_a - x) - c_a$ at some implementable utility profile, the blue point represents the optimal utility profile.}
	\label{fig:optimization_usf_risk_averse}
\end{figure}

\section{Optimal Utility Profile for a Given Action with Nash Product and Risk-Averse Agent}\label{sec:opt_profile_np_averse}
Figure~\ref{fig:optimization_np_risk_averse} illustrates the optimal utility profile for a given action $a$ if the social planner uses Nash product as the social utility function and the agent is risk-averse.

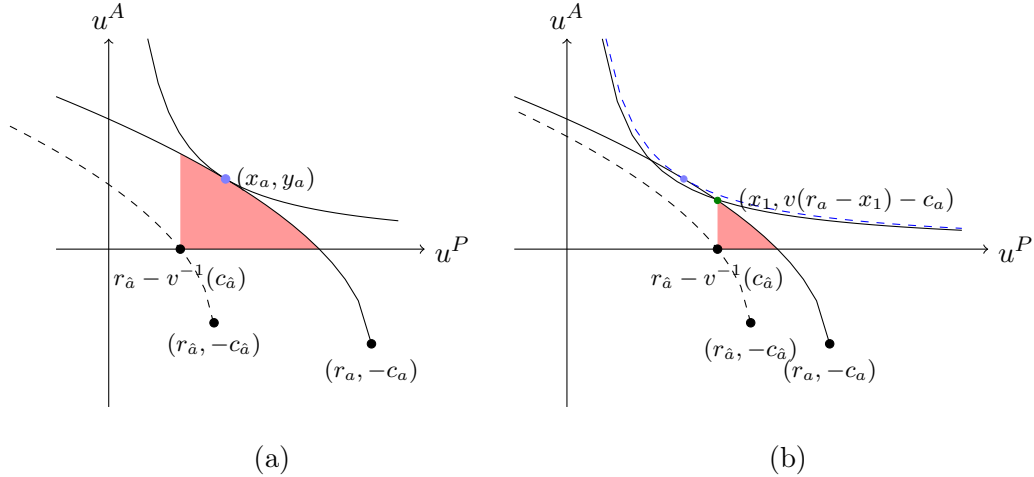
\begin{figure}[!htp]
	\centering
	\begin{tikzpicture}[xscale=0.7, yscale=1.4]
		\fill [red!40, domain = 1.001:3.64, variable = \x]
		(1.36, 0)--
		plot ({-\x+5}, {sqrt{\x} - 1})--cycle;
		
		\draw[->] (-1, 0)--(6,0);
		\draw[->] (0, -1.5)--(0,2);
		\node[right] at (6,0) {$u^P$};
		\node[above] at (0,2) {$u^A$};
		
		\draw[dashed, domain = 0.01:4, variable = \x]
		plot ({-\x+2}, {sqrt{\x} - 0.8});
		
		\draw[domain = 0.01:6, variable = \x]
		plot ({-\x+5}, {sqrt{\x} - 1});
		
		\filldraw (2, -0.7) ellipse (0.08 and 0.04);
		\node[below] at (2, -0.7) {\footnotesize $(r_{\hat{a}}, -c_{\hat{a}})$};
		
		\filldraw (4.99, -0.9) ellipse (0.08 and 0.04);
		\node[below] at (4.99, -0.95) {\footnotesize $(r_a, -c_a)$};
		
		\filldraw (1.36, 0) ellipse (0.08 and 0.04);
		\node[below] at (1.36, -0.05) {\footnotesize $r_{\hat{a}} - v^{-1}(c_{\hat{a}})$};
		
		\draw [domain = (20/27):5.5, variable = \x]
		plot ({\x}, {40 / (27*\x)});
		
		\filldraw[color = blue!50, fill = blue!50] (2.22, 0.667) ellipse (0.08 and 0.04);
		\node[right] at (2.22, 0.667) {\footnotesize $(x_a, y_a)$};
		
		\node at (3.1, -2) {(a)};
	\end{tikzpicture}
	\hspace{2mm}
	\begin{tikzpicture}[xscale=0.7, yscale=1.4]
		\fill [red!40, domain = 1.001:2.14, variable = \x]
		(2.86, 0)--
		plot ({-\x+5}, {sqrt{\x} - 1})--cycle;
		
		\draw[->] (-1, 0)--(8,0);
		\draw[->] (0, -1.5)--(0,2);
		\node[right] at (8,0) {$u^P$};
		\node[above] at (0,2) {$u^A$};
		
		\draw[dashed, domain = 0.01:4.5, variable = \x]
		plot ({-\x+3.5}, {sqrt{\x} - 0.8});
		
		\draw[domain = 0.01:6, variable = \x]
		plot ({-\x+5}, {sqrt{\x} - 1});
		
		\filldraw (3.49, -0.7) ellipse (0.08 and 0.04);
		\node[below] at (3.49, -0.75) {\footnotesize $(r_{\hat{a}}, -c_{\hat{a}})$};
		
		\filldraw (4.99, -0.9) ellipse (0.08 and 0.04);
		\node[below] at (4.99, -0.95) {\footnotesize $(r_a, -c_a)$};
		
		\filldraw (2.86, 0) ellipse (0.08 and 0.04);
		\node[below] at (2.86, -0.05) {\footnotesize $r_{\hat{a}} - v^{-1}(c_{\hat{a}})$};
		
		\draw [dashed, color = blue, domain = (20/27):7.5, variable = \x]
		plot ({\x}, {40 / (27*\x)});
		
		\draw [domain = 0.672:7.5, variable = \x]
		plot ({\x}, {1.343 / \x});
		
		\filldraw[color = blue!50, fill = blue!50] (2.22, 0.667) ellipse (0.06 and 0.03);
		
		\filldraw[color = green!50!black, fill = green!50!black] (2.86, 0.463) ellipse (0.06 and 0.03);
		\node[right] at (3.06, 0.463) {\footnotesize $(x_1, v(r_a - x_1) - c_a)$};
		
		\node at (4.2, -2) {(b)};
	\end{tikzpicture}
	\caption{Optimal utility profile for a given action $a$ if the social planner uses Nash product as the social utility function and the agent is risk-averse. $x_a$ is the solution of the equation $v(r_a - x) - x \cdot v'(r_a - x) = c_a$. (a) If $(x_a, y_a)$ is implementable, it would be the optimal utility profile (the blue point). (b) If $(x_a, y_a)$ is not implementable, the optimal utility profile would be $(x_1, v(r_a - x_1) - c_a)$ (the green point).}
	\label{fig:optimization_np_risk_averse}
\end{figure}

\section{Optimal Utility Profile for a Given Action with Egalitarian Social Welfare}\label{sec:opt_profile_esf}
Figure~\ref{fig:optimization_esf_risk_neutral} and \ref{fig:optimization_esf_risk_averse} illustrate the optimal utility profile if the social planner uses egalitarian social welfare as the social utility function and the agent is risk-neutral and risk-averse, respectively.

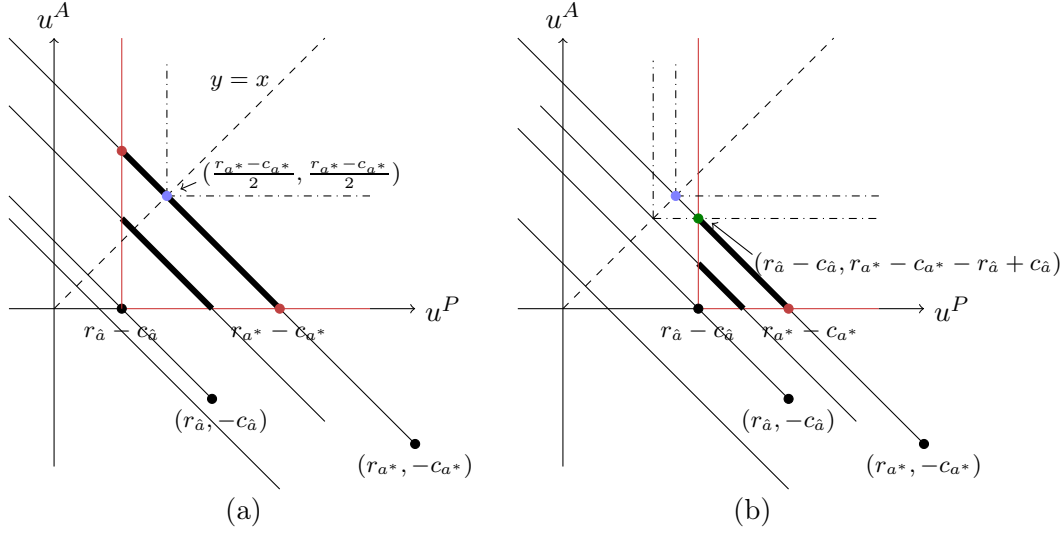
\begin{figure}[!htp]
	\centering
	\begin{tikzpicture}[xscale=0.6, yscale=0.6]
		\draw (-1,0)--(1.5,0);
		\draw[red!60!gray] (1.5,0)--(7,0);
		\draw[->] (7,0)--(8,0);
		\node[right] at (8,0) {$u^P$};
		
		\draw[->] (0,-3.5)--(0,6);
		\node[above] at (0,6) {$u^A$};
		
		\draw (3.5,-2)--(-1,2.5);
		\draw (5,-4)--(-1,2);
		\draw (6,-2.5)--(-1,4.5);
		\draw (8,-3)--(-1,6);
		
		\draw[line width=0.8mm] (3.5,0)--(1.5,2);
		
		\draw[line width=0.8mm] (5,0)--(1.5,3.5);
		
		\filldraw (3.5,-2) circle(0.1);
		\node[below] at (3.7,-2) {\footnotesize $(r_{\hat{a}}, -c_{\hat{a}})$}; 
		
		\filldraw (8,-3) circle(0.1);
		\node[below] at (8,-3) {\footnotesize $(r_{a^*}, -c_{a^*})$};
		
		\filldraw (1.5,0) circle(0.1);
		\node[below] at (1.5,-0.1) {\footnotesize $r_{\hat{a}} - c_{\hat{a}}$};
		
		\filldraw[color = red!50!gray, fill = red!50!gray] (5, 0) circle(0.1);
		\node[below] at (5,-0.1) {\footnotesize $r_{a^*} - c_{a^*}$};
		
		\filldraw[color = red!50!gray, fill = red!50!gray] (1.5,3.5) circle(0.1);
		\draw [dashdotted] (2.5, 2.5)--(7,2.5);
		\draw [dashdotted] (2.5, 2.5)--(2.5,5.5);

		\draw[red!60!gray] (1.5,0)--(1.5,6);
		\draw[dashed] (0,0)--(6,6);
		\node [left] at (5,5) {\footnotesize $y = x$};
		
		\filldraw[color = blue!50, fill = blue!50] (2.5, 2.5) circle(0.1);
		\node[right] at (3.05,3) {\footnotesize $(\frac{r_{a^*} - c_{a^*}}{2}, \frac{r_{a^*} - c_{a^*}}{2})$};
		\draw[->] (3.15,2.8)--(2.8,2.65);

		\node at (4.2, -4.5) {(a)};
	\end{tikzpicture}
	\hspace{2mm}
	\begin{tikzpicture}[xscale=0.6, yscale=0.6]
		\draw (-1,0)--(3,0);
		\draw[red!60!gray] (3,0)--(7,0);
		\draw[->] (7,0)--(8,0);
		\node[right] at (8,0) {$u^P$};
		
		\draw[->] (0,-3.5)--(0,6);
		\node[above] at (0,6) {$u^A$};
		
		\draw (5,-2)--(-1,4);
		\draw (5,-4)--(-1,2);
		\draw (6.5,-2.5)--(-0.5,4.5);
		\draw (8,-3)--(-1,6);
		
		\draw[line width=0.8mm] (4,0)--(3,1);
		
		\draw[line width=0.8mm] (5,0)--(3,2);
		
		\draw [dashed] (0,0)--(6,6);
		
		\filldraw (5,-2) circle(0.1);
		\node[below] at (5,-2) {\footnotesize $(r_{\hat{a}}, -c_{\hat{a}})$}; 
		
		\filldraw (8,-3) circle(0.1);
		\node[below] at (8,-3) {\footnotesize $(r_{a^*}, -c_{a^*})$};
		
		\filldraw[color = red!50!gray, fill = red!50!gray] (5, 0) circle(0.1);
		\node[below] at (5.5,-0.1) {\footnotesize $r_{a^*} - c_{a^*}$};
		
		\draw[red!60!gray] (3,0)--(3,6);
		
		\filldraw (3,0) circle(0.1);
		\node[below] at (3,-0.1) {\footnotesize $r_{\hat{a}} - c_{\hat{a}}$};
		
		\draw [dashdotted] (2.5,2.5)--(7,2.5);
		\draw [dashdotted] (2.5,2.5)--(2.5,5.5);
		
		\draw [dashdotted] (2,2)--(7,2);
		\draw [dashdotted] (2,2)--(2,5.5);
		
		\filldraw[color = green!50!black, fill = green!50!black] (3, 2) circle(0.1);
		\filldraw [color = blue!50, fill = blue!50] (2.5, 2.5) circle(0.1);
		
		\node [right] at (4,1) {\footnotesize $(r_{\hat{a}} - c_{\hat{a}}, r_{a^*} - c_{a^*} - r_{\hat{a}} + c_{\hat{a}})$};
		\draw [->] (4.2, 1.2)--(3.3,1.9);
		
		\node at (4.2, -4.5) {(b)};
	\end{tikzpicture}
	\caption{Optimal utility profile if the social planner uses egalitarian social welfare as the social utility function and the agent is risk-neutral. Denote $a^*$ as the action that induces the largest utilitarian social welfare (USF). The dashdotted line is the contour line for the objective function $\min \left\{u^P, u^A\right\}$. (a) $\frac{r_{a^*} - c_{a^*}}{2} \geq \max \left\{0, r_{\hat{a}} - c_{\hat{a}}\right\}$, the blue point represents the optimal utility profile. (b) $0 \leq \frac{r_{a^*} - c_{a^*}}{2} \leq r_{\hat{a}} - c_{\hat{a}}$, the green point represents the optimal utility profile.}
	\label{fig:optimization_esf_risk_neutral}
\end{figure}

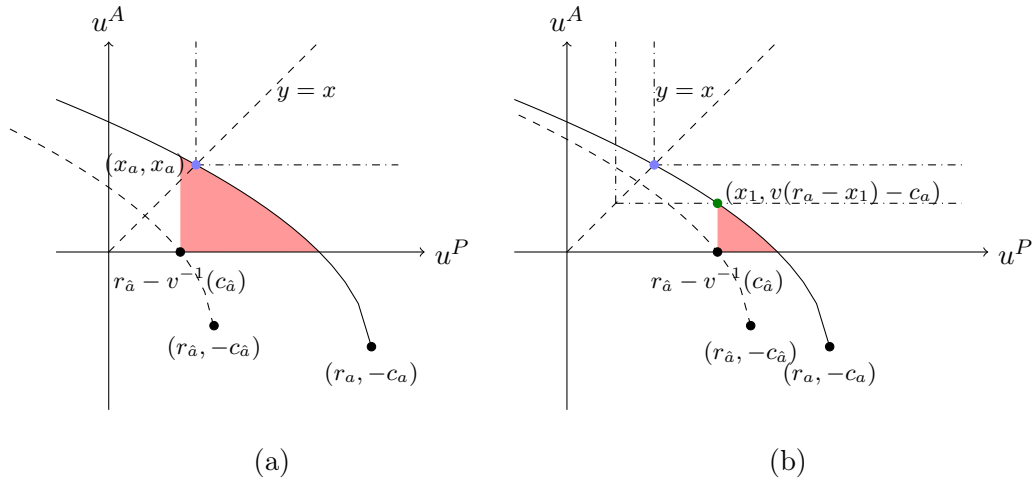
\begin{figure}[!htp]
	\centering
	\begin{tikzpicture}[xscale=0.7, yscale=1.4]
		\fill [red!40, domain = 1.001:3.64, variable = \x]
		(1.36, 0)--
		plot ({-\x+5}, {sqrt{\x} - 1})--cycle;
		
		\draw[->] (-1, 0)--(6,0);
		\draw[->] (0, -1.5)--(0,2);
		\node[right] at (6,0) {$u^P$};
		\node[above] at (0,2) {$u^A$};
		
		\draw[dashed, domain = 0.01:4, variable = \x]
		plot ({-\x+2}, {sqrt{\x} - 0.8});
		
		\draw[domain = 0.01:6, variable = \x]
		plot ({-\x+5}, {sqrt{\x} - 1});
		
		\filldraw (2, -0.7) ellipse (0.08 and 0.04);
		\node[below] at (2, -0.7) {\footnotesize $(r_{\hat{a}}, -c_{\hat{a}})$};
		
		\filldraw (4.99, -0.9) ellipse (0.08 and 0.04);
		\node[below] at (4.99, -0.95) {\footnotesize $(r_a, -c_a)$};
		
		\filldraw (1.36, 0) ellipse (0.08 and 0.04);
		\node[below] at (1.36, -0.05) {\footnotesize $r_{\hat{a}} - v^{-1}(c_{\hat{a}})$};
		
		\draw [dashed] (0, 0)--(4,2);
		
		\draw [dashdotted] (1.657,0.828)--(5.5,0.828);
		\draw [dashdotted] (1.657,0.828)--(1.657,2); 
		
		\filldraw[color = blue!50, fill = blue!50] (1.657,0.828) ellipse (0.08 and 0.04);
		\node[left] at (1.657,0.828) {\footnotesize $(x_a, x_a)$};
		
		\node[right] at (3,1.5) {\footnotesize $y = x$};
		
		\node at (3.1, -2) {(a)};
	\end{tikzpicture}
	\hspace{2mm}
	\begin{tikzpicture}[xscale=0.7, yscale=1.4]
		\fill [red!40, domain = 1.001:2.14, variable = \x]
		(2.86, 0)--
		plot ({-\x+5}, {sqrt{\x} - 1})--cycle;
		
		\draw[->] (-1, 0)--(8,0);
		\draw[->] (0, -1.5)--(0,2);
		\node[right] at (8,0) {$u^P$};
		\node[above] at (0,2) {$u^A$};
		
		\draw[dashed, domain = 0.01:4.5, variable = \x]
		plot ({-\x+3.5}, {sqrt{\x} - 0.8});
		
		\draw[domain = 0.01:6, variable = \x]
		plot ({-\x+5}, {sqrt{\x} - 1});
		
		\filldraw (3.49, -0.7) ellipse (0.08 and 0.04);
		\node[below] at (3.49, -0.75) {\footnotesize $(r_{\hat{a}}, -c_{\hat{a}})$};
		
		\filldraw (4.99, -0.9) ellipse (0.08 and 0.04);
		\node[below] at (4.99, -0.95) {\footnotesize $(r_a, -c_a)$};
		
		\filldraw (2.86, 0) ellipse (0.08 and 0.04);
		\node[below] at (2.86, -0.05) {\footnotesize $r_{\hat{a}} - v^{-1}(c_{\hat{a}})$};
		
		\draw [dashed] (0,0)--(4,2);
		\node [left] at (3,1.5) {\footnotesize $y = x$};
		
		\draw [dashdotted] (1.657,0.828)--(7.5,0.828);
		\draw [dashdotted] (1.657,0.828)--(1.657,2); 
		
		\draw [dashdotted] (0.926,0.463)--(7.5,0.463);
		\draw [dashdotted] (0.926,0.463)--(0.926,2);
		
		\filldraw[color = green!50!black, fill = green!50!black] (2.86, 0.463) ellipse (0.08 and 0.04);
		\node[right] at (2.8, 0.563) {\footnotesize $(x_1, v(r_a - x_1) - c_a)$};
		
		\filldraw[color = blue!50, fill = blue!50] (1.657,0.828) ellipse (0.08 and 0.04);
		
		\node at (4.2, -2) {(b)};
	\end{tikzpicture}
	\caption{Optimal utility profile for a given action $a$ if the social planner uses egalitarian social welfare as the social utility function and the agent is risk-averse. Denote $x_1 = r_{\hat{a}} - v^{-1}(c_{\hat{a}})$, and $x_a$ being the solution of the equation $x = v(r_a - x) - c_a$. The dashdotted line is the contour line for the objective function $\min \left\{u^P, u^A\right\}$. (a) If $(x_a, x_a)$ is implementable, it would be the optimal utility profile (the blue point). (b) If $(x_a, y_a)$ is not implementable, the optimal utility profile would be $(x_1, v(r_a - x_1) - c_a)$ (the green point).}
	\label{fig:optimization_esf_risk_averse}
\end{figure}

\section{Optimal Utility Profile for a Given Action with Approximated Fairness}\label{sec:opt_profile_af}
Figure~\ref{fig:optimization_af_risk_neutral} and \ref{fig:optimization_af_risk_averse} illustrate the optimal utility profile if the social planner uses approximated fairness as the social utility function and the agent is risk-neutral and risk-averse, respectively.

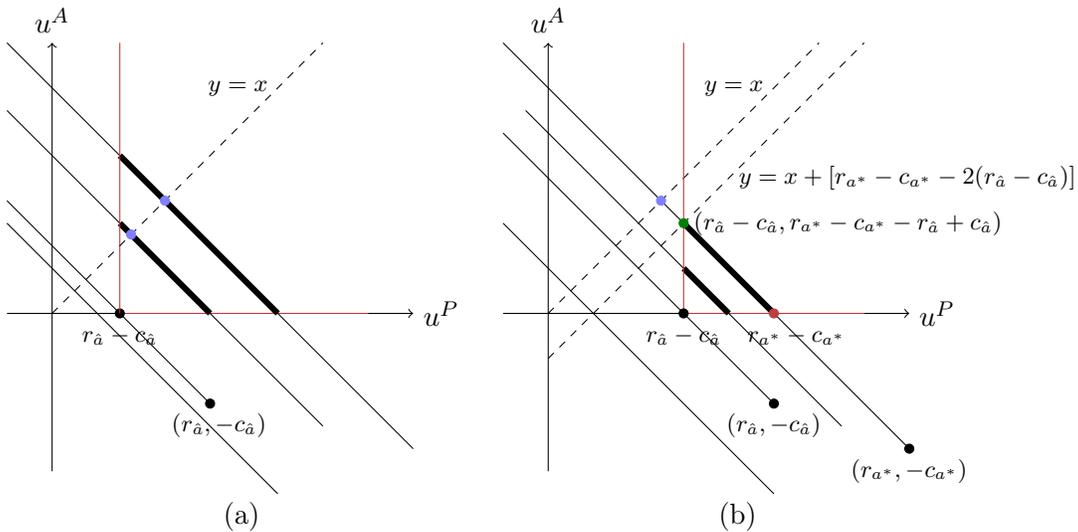
\begin{figure}[!htp]
	\centering
	\begin{tikzpicture}[xscale=0.6, yscale=0.6]
		\draw (-1,0)--(1.5,0);
		\draw[red!60!gray] (1.5,0)--(7,0);
		\draw[->] (7,0)--(8,0);
		\node[right] at (8,0) {$u^P$};
		
		\draw[->] (0,-3.5)--(0,6);
		\node[above] at (0,6) {$u^A$};
		
		\draw (3.5,-2)--(-1,2.5);
		\draw (5,-4)--(-1,2);
		\draw (6,-2.5)--(-1,4.5);
		\draw (8,-3)--(-1,6);
		
		\draw[line width=0.8mm] (3.5,0)--(1.5,2);
		
		\draw[line width=0.8mm] (5,0)--(1.5,3.5);
		
		\filldraw (3.5,-2) circle(0.1);
		\node[below] at (3.7,-2) {\footnotesize $(r_{\hat{a}}, -c_{\hat{a}})$}; 
				
		\filldraw (1.5,0) circle(0.1);
		\node[below] at (1.5,-0.1) {\footnotesize $r_{\hat{a}} - c_{\hat{a}}$};

		\draw[red!60!gray] (1.5,0)--(1.5,6);
		\draw[dashed] (0,0)--(6,6);
		\node [left] at (5,5) {\footnotesize $y = x$};
		
		\filldraw[color = blue!50, fill = blue!50] (2.5, 2.5) circle(0.1);
		\filldraw [color = blue!50, fill = blue!50] (1.75, 1.75) circle(0.1);

		\node at (4.2, -4.5) {(a)};
	\end{tikzpicture}
	\hspace{2mm}
	\begin{tikzpicture}[xscale=0.6, yscale=0.6]
		\draw (-1,0)--(3,0);
		\draw[red!60!gray] (3,0)--(7,0);
		\draw[->] (7,0)--(8,0);
		\node[right] at (8,0) {$u^P$};
		
		\draw[->] (0,-3.5)--(0,6);
		\node[above] at (0,6) {$u^A$};
		
		\draw (5,-2)--(-1,4);
		\draw (5,-4)--(-1,2);
		\draw (6.5,-2.5)--(-0.5,4.5);
		\draw (8,-3)--(-1,6);
		
		\draw[line width=0.8mm] (4,0)--(3,1);
		
		\draw[line width=0.8mm] (5,0)--(3,2);
		
		\draw [dashed] (0,0)--(6,6);
		\node [left] at (5,5) {\footnotesize $y = x$};
		
		\draw [dashed] (0,-1)--(7,6);
		\node [right] at (4,3) {\footnotesize $y = x + [r_{a^*} - c_{a^*} - 2 (r_{\hat{a}} - c_{\hat{a}})]$};
		
		\filldraw (5,-2) circle(0.1);
		\node[below] at (5,-2) {\footnotesize $(r_{\hat{a}}, -c_{\hat{a}})$}; 
		
		\filldraw (8,-3) circle(0.1);
		\node[below] at (8,-3) {\footnotesize $(r_{a^*}, -c_{a^*})$};
		
		\filldraw[color = red!50!gray, fill = red!50!gray] (5, 0) circle(0.1);
		\node[below] at (5.5,-0.1) {\footnotesize $r_{a^*} - c_{a^*}$};
		
		\draw[red!60!gray] (3,0)--(3,6);
		
		\filldraw (3,0) circle(0.1);
		\node[below] at (3,-0.1) {\footnotesize $r_{\hat{a}} - c_{\hat{a}}$};

		\filldraw[color = green!50!black, fill = green!50!black] (3, 2) circle(0.1);
		\node[right] at (3,2) {\footnotesize $(r_{\hat{a}} - c_{\hat{a}}, r_{a^*} - c_{a^*} - r_{\hat{a}} + c_{\hat{a}})$};
		
		\filldraw [color = blue!50, fill = blue!50] (2.5, 2.5) circle(0.1);
		
		\node at (4.2, -4.5) {(b)};
	\end{tikzpicture}
	\caption{Optimal utility profile if the social planner uses approximated fairness as the social utility function and the agent is risk-neutral. Denote $a^*$ as the action that induces the largest utilitarian social welfare (USF). (a) If $y = x$ intersects the implementable set with more than one point, any intersection point corresponds to an optimal utility profiles (the blue points). (b) $y = x$ has no intersection with the implementable set, the green point represents the optimal utility profile.}
	\label{fig:optimization_af_risk_neutral}
\end{figure}

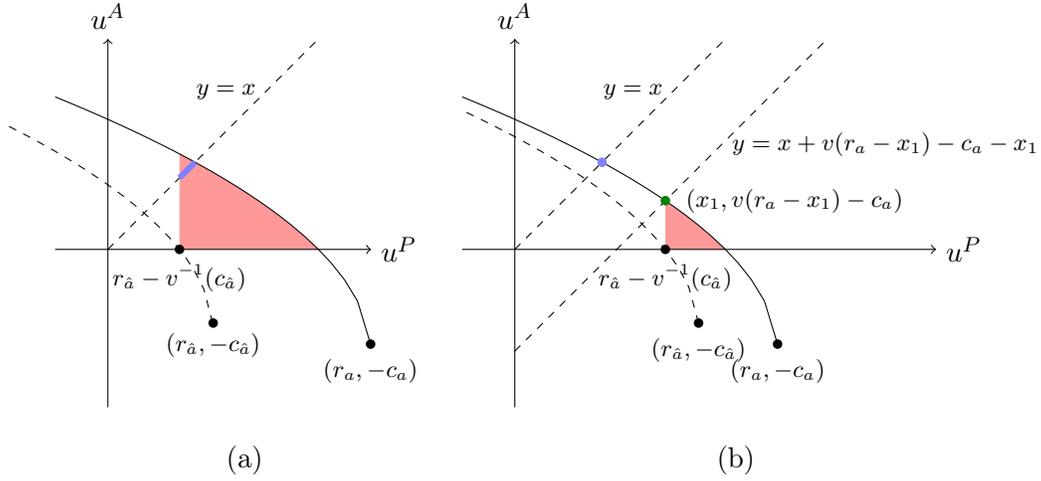
\begin{figure}[!htp]
	\centering
	\begin{tikzpicture}[xscale=0.7, yscale=1.4]
		\fill [red!40, domain = 1.001:3.64, variable = \x]
		(1.36, 0)--
		plot ({-\x+5}, {sqrt{\x} - 1})--cycle;
		
		\draw[->] (-1, 0)--(5,0);
		\draw[->] (0, -1.5)--(0,2);
		\node[right] at (5,0) {$u^P$};
		\node[above] at (0,2) {$u^A$};
		
		\draw[dashed, domain = 0.01:4, variable = \x]
		plot ({-\x+2}, {sqrt{\x} - 0.8});
		
		\draw[domain = 0.01:6, variable = \x]
		plot ({-\x+5}, {sqrt{\x} - 1});
		
		\filldraw (2, -0.7) ellipse (0.08 and 0.04);
		\node[below] at (2, -0.7) {\footnotesize $(r_{\hat{a}}, -c_{\hat{a}})$};
		
		\filldraw (4.99, -0.9) ellipse (0.08 and 0.04);
		\node[below] at (4.99, -0.95) {\footnotesize $(r_a, -c_a)$};
		
		\filldraw (1.36, 0) ellipse (0.08 and 0.04);
		\node[below] at (1.36, -0.05) {\footnotesize $r_{\hat{a}} - v^{-1}(c_{\hat{a}})$};
		
		\draw [dashed] (0, 0)--(4,2);
		
		\draw [color = blue!50, line width=0.8mm] (1.657,0.828)--(1.36,0.68);
		
		\node[left] at (3,1.5) {\footnotesize $y = x$};
		
		\node at (2.6, -2) {(a)};
	\end{tikzpicture}
	\hspace{2mm}
	\begin{tikzpicture}[xscale=0.7, yscale=1.4]
		\fill [red!40, domain = 1.001:2.14, variable = \x]
		(2.86, 0)--
		plot ({-\x+5}, {sqrt{\x} - 1})--cycle;
		
		\draw[->] (-1, 0)--(8,0);
		\draw[->] (0, -1.5)--(0,2);
		\node[right] at (8,0) {$u^P$};
		\node[above] at (0,2) {$u^A$};
		
		\draw[dashed, domain = 0.01:4.5, variable = \x]
		plot ({-\x+3.5}, {sqrt{\x} - 0.8});
		
		\draw[domain = 0.01:6, variable = \x]
		plot ({-\x+5}, {sqrt{\x} - 1});
		
		\filldraw (3.49, -0.7) ellipse (0.08 and 0.04);
		\node[below] at (3.49, -0.75) {\footnotesize $(r_{\hat{a}}, -c_{\hat{a}})$};
		
		\filldraw (4.99, -0.9) ellipse (0.08 and 0.04);
		\node[below] at (4.99, -0.95) {\footnotesize $(r_a, -c_a)$};
		
		\filldraw (2.86, 0) ellipse (0.08 and 0.04);
		\node[below] at (2.86, -0.05) {\footnotesize $r_{\hat{a}} - v^{-1}(c_{\hat{a}})$};
		
		\draw [dashed] (0,0)--(4,2);
		\node [left] at (3,1.5) {\footnotesize $y = x$};
		
		\draw [dashed] (0,-0.967)--(5.934,2);
		\node [right] at (3.934,1) {\footnotesize $y = x + v(r_a - x_1) - c_a - x_1$};
		
		\filldraw[color = green!50!black, fill = green!50!black] (2.86, 0.463) ellipse (0.08 and 0.04);
		\node[right] at (3.06, 0.463) {\footnotesize $(x_1, v(r_a - x_1) - c_a)$};
		
		\filldraw[color = blue!50, fill = blue!50] (1.657,0.828) ellipse (0.08 and 0.04);
		
		\node at (4.2, -2) {(b)};
	\end{tikzpicture}
	\caption{Optimal utility profile for a given action $a$ if the social planner uses approximated fairness as the social utility function and the agent is risk-averse. Denote $x_1 = r_{\hat{a}} - v^{-1}(c_{\hat{a}})$. (a) If $y = x$ intersects the implementable set, the bold blue parts of the line represents optimal utility profiles. (b) If $y = x$ does not intersect the implementable set, the green point represents the optimal utility profile.}
	\label{fig:optimization_af_risk_averse}
\end{figure}

\section{Utility Profiles for Risk-Averse Principal and Risk-Neutral Agent}\label{sec:profile_risk_averse_principal}
Figure~\ref{fig:extension_risk_averse} shows the possible utility profiles when the principal is risk-averse and the agent is risk-neutral.

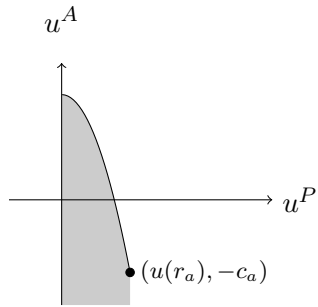
\begin{figure}[!htp]
	\centering
	\begin{tikzpicture}[xscale=0.7, yscale=1.4]
		\fill [gray!40, domain = 0.01:1.3, variable = \x]
		(1.3, -1)--
		(0, -1)--
		plot ({\x}, {-\x^2 + 1})--cycle;
		
		\draw[->] (-1, 0)--(4,0);
		\draw[->] (0, -1)--(0,1.3);
		\node[right] at (4,0) {$u^P$};
		\node[above] at (0,1.5) {$u^A$};

		\draw[domain = 0.01:1.3, variable = \x]
		plot ({\x}, {-\x^2 + 1});
		
		\filldraw (1.3, -0.69) ellipse (0.08 and 0.04);
		\node[right] at (1.3, -0.69) {\footnotesize $(u(r_a), -c_a)$};
		
	\end{tikzpicture}
	\caption{Super set of possible utility profiles when the principal is risk-averse while the agent is risk-neutral.}
	\label{fig:extension_risk_averse}
\end{figure}

\section{Discussion and Extension}\label{sec:extend}
We discuss possible extensions of our framework and the relationship between our model and conventional principal-agent model, and Bayesian persuasion.

\subsection{Generalization to Online Learning}\label{subsec:online_learning}
The principal-agent model considered in this paper is completely transparent, where the cost for each action, the reward for each outcome, and the value function for the monetary transfer are common knowledge of the game. However, some of the assumptions are stringent that may limit its applicability in practice. For example, the risk attitude and the specific value function of the agent may be private.

Concerning the above issue, we can assume that the players can interact repeatedly and cast the problem into an online learning framework. In this setting, the agent may have different risk attitudes, while the value function towards the contract is unknown and is chosen adversarially from a finite set of possible types, the social planner prescribes an information structure at each round and we are interested in developing \emph{no-regret} algorithms with performances comparable to a best-in-hindsight information structure. Such flavour of work has been considered in the model of Bayesian persuasion~\citep{castiglioni2020online,castiglioni2021multi,bernasconi2023optimal},but is lack of study in the information design in principal-agent problem, and we leave it as future work.

\subsection{Selfish Third Party}\label{subsec:selfish_third_party}
In this work, we assume that the social planner is a completely independent third party, whose utility function is solely based on the social purpose. In reality, there may be situations where the third party has its own concern or personal interest. In this case, the optimization problem in the first stage should be modified for the consideration of the third party's utility, and the geometric interpretation of the utility functions should be triaxial as well, which may lead to different optimal utility profiles. We leave such consideration as future work.

\subsection{Relationship with Principal-Agent Problem}\label{subsec:relation_principal_agent}
In economic theory, the principal-agent problem typically arises where the two parties have different interests and asymmetric information. Concerning the information asymmetry, the model can be divided into two categories: (a) moral hazard~\citep{holmstrom1979moral} where the actions of the agent is not observed and (b) adverse selection~\citep{hart1987theory} where the characteristics of the agent is not observed. 

The setting considered in this paper is closely related to the moral hazard model, where the agent takes some actions which the principal cannot observe, but instead some signals from those actions are revealed and contracts should be written on those signals.
However, there are two main differences. 
First, in classic principal-agent model, the signal not only provides the principal with some information about the agent's action, but also specifies the profit or reward for the principal, while the information structure in our model has no role in determining the income for the principal. 
Second, classic principal-agent model involves two parties and mainly focus on designing the optimal contract from the principal's point of view, assuming that the agent is rational.
In this work, we introduce a third party, the social planner, who acts as a conciliator between the principal and the agent. By controlling the information flow, the social planner has great power to decide ``where the game is going''. The main purpose of our work is to design the information structure toward a specific social purpose characterized by a social utility function, assuming that both the principal and the agent are rational that each of them would maximize their own utility function. The workflow proposed in this work provides insight for the minimum amount of information needed to induce certain equilibrium of the system. 
Whether the information design process developed in this work can be extended to the model of adverse selection problem, i.e., the system is not fully known to the social planner while the principal and/or the agent may be informed some private messages before the information structure is designed, would be an interesting future direction.

\subsection{Relationship with Bayesian Persuasion}\label{subsec:relation_bp}
There are several fundamental differences between information design in the principal-agent problem and the seminal work of Bayesian persuasion~\citep{kamenica2011bayesian}.

In Bayesian persuasion, the model consists of two players, a sender and a receiver, where the sender is the information structure (signaling scheme) designer. Owing to the superiority of knowing the realized state of nature, the sender designs the signaling scheme such that the receiver would take an action maximizing the sender's payoff, where the model assumes that the sender and the receiver shares a common prior distribution over the states of nature. In the information design in principal-agent problem, there are three players, the social planner, the principal and the agent, where the social planner designs the information structure, the principal then designs a contract based on the information structure and finally the agent takes the action which maximizes her own utility function and affects the payoff of both the principal and the society. In Bayesian persuasion, the signaling scheme is designed for each possible state of nature, while in principal-agent problem, the information structure is depicted for each available actions for the agent. A fundamental difference between these two scenarios is that the Bayesian persuasion sender is merely an observer and has no power to change the state of nature, while in the principal-agent model, different information structures induce different contract designed by the principal, and would in turns induce different optimal actions for the agent. Therefore, in some sense, the information design problem is more complicated than Bayesian persuasion.

However, in another sense, the information structure is simpler in the principal-agent model than in the Bayesian persuasion model. As is noted in \citet[Proposition 1]{kamenica2011bayesian}, the signaling scheme in Bayesian persuasion has a close relationship to the revelation principle~\citep{myerson1979incentive}, i.e., the sender only needs to design signaling scheme that directly recommends actions to the receiver. Therefore, in the basic model of Bayesian persuasion, researchers usually assume that the signal space is not smaller than the state space and the action space~\citep{kamenica2019bayesian}, i.e., $\abs{S} \geq \min \left\{\abs{\Omega}, \abs{A}\right\}$. This would raise an issue if the state space and the action space are both large while the availability of messages is limited, \citet{aybas2019persuasion} shows that the sender's utility would always be worse off with coarse communication. Fortunately, in the information design in principal-agent problem, such difficulty need hardly be taken into account. Although there are infinite available contracts for the principal and $n$ possible actions for the agent, the social planner needs only care about binary-signal information structure. Such ``blessing'' comes from the specific structure of the principal-agent problem. The agent's utility function consists of two parts, i.e., the cost for the chosen action and the expected monetary transfer. Therefore, if two actions share the same expected transfer, the one with cheaper cost always dominates, and hence we can simply divide the actions into two groups: one group employs the socially optimal action $a^*$, i.e., the corresponding action for the optimal utility profile that the social planner would like to induce, as the dominated action, while the other group employs action $\hat{a}$, which by definition is the action with the maximum expected reward for the principal among all the least costly actions for the agent, as the dominant. From this observation, it suffices for the principal to design a contract with binary-choice monetary transfer, since the size of the ``state space'' is two.

\end{document}